\theoremstyle{definition}
\newtheorem{definition}{Definition}
\newtheorem{proposition}{Proposition}
\newtheorem{theorem}{Theorem}
\newtheorem{lemma}{Lemma}
	\theoremstyle{plain}
\providecommand{\customgenericname}{}
\newcommand{\newcustomtheorem}[2]{%
	\newenvironment{#1}[1]
	{%
		\renewcommand\customgenericname{#2}%
		\renewcommand\theinnercustomgeneric{##1}%
		\innercustomgeneric
	}
	{\endinnercustomgeneric}
}
\begin{document}

\title{Frequency-Resolved Optical Gating Recovery via Smoothing Gradient }

\author{Samuel Pinilla,~\IEEEmembership{Student Member,~IEEE,}
        Tamir Bendory,
        Yonina C. Eldar,~\IEEEmembership{Fellow Member,~IEEE,}
        and~Henry Arguello,~\IEEEmembership{Senior Member,~IEEE}
        \thanks{S. Pinilla is with the Department of Electrical Engineering, Universidad Industrial de Santander, Bucaramanga, Santander, 680002 Colombia. e-mail: samuel.pinilla@correo.uis.edu.co.}
        \thanks{T. Bendory is with the Program in Applied and Computational Mathematics, Princeton University, Princeton, NJ, USA. e-mail: tamir.bendory@princeton.edu.}
        \thanks{Y. C. Eldar is with the Weizmann Institute of Science, Rehovot, 7610001 Israel. e-mail: yonina.eldar@weizmann.ac.il.}
        \thanks{H. Arguello is with the Department of Computer Science, Universidad Industrial de Santander, Bucaramanga, Santander, 680002 Colombia. e-mail: henarfu@uis.edu.co.}
        \thanks{Manuscript received November XX, 201X; revised August XX, 201X.}\vspace{-1.5em}}

\markboth{Journal of \LaTeX\ Class Files,~Vol.~14, No.~8, August~2015}%
{Shell \MakeLowercase{\textit{et al.}}: Bare Demo of IEEEtran.cls for IEEE Journals}

\maketitle

\begin{abstract}
Frequency-resolved optical gating (FROG) is a popular technique for complete characterization of ultrashort laser pulses. The acquired data in FROG, called FROG trace, is the Fourier magnitude of the product of the unknown pulse with a time-shifted version of itself, for several different shifts. To estimate the pulse from the FROG trace, we propose an algorithm that minimizes a smoothed non-convex least-squares objective function. The method consists of two steps. First, we approximate the pulse by an iterative spectral algorithm. Then, the attained initialization is refined based upon a sequence of block stochastic gradient iterations. The algorithm is theoretically simple, numerically scalable, and easy-to-implement. Empirically, our approach outperforms the state-of-the-art when the FROG trace is incomplete, that is, when only few shifts are recorded. Simulations also suggest that the proposed algorithm exhibits similar computational cost compared to a state-of-the-art technique for both complete and incomplete data. In addition, we prove that in the vicinity of the true solution, the algorithm converges to a critical point. A Matlab implementation is publicly available at \url{https://github.com/samuelpinilla/FROG}.
\end{abstract}

\begin{IEEEkeywords}
Pulse reconstruction, spectral algorithm, FROG, ultrashort pulse characterization, phase retrieval, smoothing gradient technique.
\end{IEEEkeywords}

\IEEEpeerreviewmaketitle

\section{Introduction}

\IEEEPARstart{P}{hase} retrieval (PR) is the inverse problem of recovering a signal from its Fourier magnitude \cite{bendory2017fourier}. PR arises in many fields in science and engineering, such as optics \cite{xu2015overcoming,shechtman2015phase}, astronomical imaging \cite{fienup1987phase}, microscopy \cite{mayo2003x}, and X-ray crystallography \cite{millane1990phase}. In this work, we focus on a popular technique for full characterization of ultrashort pulses called frequency-resolved optical gating (FROG) \cite{trebino2012frequency,trebino1997measuring}. The acquired data in FROG corresponds to the squared Fourier magnitude of the product of the unknown pulse with its delayed replica, for several different time shifts. The product of the signal with itself is usually performed using a second harmonic generation crystal. This measured data is called the \textit{FROG trace}. In this paper, we focus on the inverse problem of recovering a pulse from its second-harmonic generation FROG trace.

Recent works have studied conditions under which a pulse can be uniquely identified, up to trivial ambiguities, from its FROG trace \cite{7891012,bendory2017signal}. In Section \ref{sec:background} we present and discuss these results. In particular, it has been shown that in theory not all the delay steps are needed to recover the pulse. The most commonly used algorithm to estimate a pulse from its FROG trace is the principal component generalized projections (PCGP), originally introduced in \cite{kane1999recent}. PCGP follows classical algorithms in PR based on alternating projections. Specifically, PCGP is initialized by a Gaussian pulse with random phases. It then builds an auxiliary matrix by rearranging the columns in the time-delay plane which is Fourier transformed. The pulse is next updated as the leading eigenvector of this matrix. The FROG trace of the updated pulse is then constructed and its magnitude is replaced by the acquired phaseless measurements. This procedure is repeated until convergence.

A recent algorithm, called Retrieved-Amplitude N-grid Algorithmic (RANA) \cite{jafari2019100}, exploits the expected continuity of the signal to construct several initial estimates of the power spectrum of the pulse. These estimates are obtained from a set of smaller grids of the FROG trace, a strategy called multi-grid. The initial guesses are then refined using the generalized projections method in \cite{delong1994pulse}, in parallel, where the one that best fits the acquired measurements is chosen as the reconstructed pulse. The multi-grid initialization procedure allows RANA to converge faster than PCGP and leads to more accurate reconstructions. Importantly, both RANA and PCGP were developed to retrieve the pulse when all the delay steps of the FROG trace are acquired.

In \cite{sidorenko2016ptychographic}, the authors suggest an alternative recovery strategy, inspired by blind ptychography \cite{bendory2017fourier}. This method starts with the integrated measured FROG trace over the frequency dimension as an initial guess. Then, the initialization is refined using a stochastic descent strategy, which involves a single time shift of the FROG trace per iteration to update the estimated pulse. This paper claims to attain better estimation of the pulse compared to PCGP. However, the non-smooth cost function may increase the amount of required measurements (sample complexity) to recover the pulse, which affects the success rate as will be shown in the numerical results. This limitation appears since the non-smoothness of the objective may lead to unbounded gradients \cite{8410803,wang2018phase}.

In this paper we propose a block stochastic gradient algorithm (BSGA) for FROG recovery that minimizes a smoothed amplitude-based least-squares empirical loss. Amplitude-based objectives have shown improved results in standard PR \cite{pauwels2018fienup}. BSGA is initialized by a spectral method that requires only few iterations. This procedure can be seen as a modification of the strategy proposed in \cite{bendory2018non} that approximates the signal $\mathbf{x}$ from the FROG trace as the leading eigenvector of a carefully designed matrix. The two stages (initialization and gradient iterations) differ from prior contributions by the new initialization technique and the inclusion of a smoothing function, following \cite{8410803}. Specifically, in contrast to \cite{jafari2019100}, the proposed initialization aims to estimate the pulse, rather than its power spectrum. Numerical results show that our initialization returns a more accurate estimation of the pulse compared to the starting point of the ptychography (Ptych) method in \cite{sidorenko2016ptychographic}. Furthermore, BSGA shows improvements in recovering both the magnitude and the phase of the pulse from incomplete data ($L>1$) compared to Ptych. As aforementioned, alternative methods such as PCGP and RANA are designed for $L=1$, and therefore do not work well in this regime. In Theorem \ref{theo:contraction} we provide partial theoretical justification for the success of the algorithm by showing that in the vicinity of the true solution, BSGA converges to a critical point. 

The paper is organized as follows. We begin in Section \ref{sec:background} by introducing necessary background on FROG. Section \ref{sec:algorithm} presents the proposed initialization technique and introduces an iterative procedure to refine the solution by minimizing a smooth least-squares objective. Section \ref{sec:numerical} presents numerical results and compares our approach with competitive algorithms. Finally, Section \ref{sec:conclusion} concludes the paper.

We denote by $\mathbb{R}_{+}:=\{w\in \mathbb{R}: w\geq 0\}$ and $\mathbb{R}_{++}:=\{w\in \mathbb{R}: w>0\}$ the sets of positive and strictly positive real numbers, respectively. The conjugate and the conjugate transpose of the vector $\mathbf{w}\in \mathbb{C}^{N}$ are denoted as $\overline{\mathbf{w}}\in \mathbb{C}^{N}$ and $\mathbf{w}^{H}\in \mathbb{C}^{N}$, respectively. The $n$th entry of a vector $\mathbf{w}$, which is assumed to be periodic, is written as $\mathbf{w}[n]$. We denote by $\tilde{\mathbf{w}}$ and $\hat{\mathbf{w}}$ the Fourier transform of a vector and its conjugate reflected version (that is, $\hat{\mathbf{w}}[n] := \overline{\mathbf{w}}[-n]$). The notation $\text{diag}(\mathbf{W},\ell)$ refers to a column vector with entries $\mathbf{W}[j,(j+\ell)\mod N]$ for $j=0,\cdots,N-1$. For vectors, $\|\mathbf{w}\|_p$ is the $\ell_p$ norm. Additionally, we use $\odot,$ and $*$ for the Hadamard (point-wise) product, and convolution, respectively. Finally, $\mathbb{E}[\cdot]$ represents the expected value.

\section{Problem formulation}
\label{sec:background}
FROG is probably the most commonly-used approach for full characterization of ultrashort optical pulses due to its simplicity and good experimental performance \cite{trebino2012frequency}. Experimentally, a FROG apparatus produces a two-dimensional intensity diagram, also known as FROG trace, of an input pulse by interacting the pulse with delayed versions of itself in a nonlinear-optical medium, usually using a second harmonic generation (SHG) crystal \cite{trebino2012frequency}. Mathematically, the FROG trace of a signal $\mathbf{x}\in \mathbb{C}^{N}$ is defined as
\begin{align}
\mathbf{Z}[p,k] &:= \left\lvert\sum_{n=0}^{N-1} \mathbf{x}[n]\mathbf{x}[n+pL]e^{-2\pi ink/N} \right \rvert^{2}, \nonumber\\
k&=0,\cdots,N-1,\hspace{0.3em}p=0,\cdots,R-1,
\label{eq:system}
\end{align}
with $R = \lceil N/L \rceil$ where $L<N$ and $i:=\sqrt{-1}$. This work assumes that the signal $\mathbf{x}$ is periodic, that is, $\mathbf{x}[n] = \mathbf{x}[n+lN]$ for any $l\in \mathbb{Z}$.

The FROG trace defined in \eqref{eq:system} can be considered as a map $\mathbb{C}^{N}\rightarrow \mathbb{R}_{+}^{\lceil N/L\rceil}$ that has three types of symmetry, usually called \textit{trivial ambiguities} in the PR literature. These ambiguities are summarized in Proposition \ref{theo:ambiguities}, using the following definition of a bandlimited signal.

\begin{definition}
	We say that $\mathbf{x}\in \mathbb{C}^{N}$ is a $\hspace{10em}B-bandlimited$ signal if its Fourier transform $\tilde{\mathbf{x}}\in \mathbb{C}^{N}$ contains $N-B$ consecutive zeros. That is, there exists $k$ such that $\tilde{\mathbf{x}}[k]=\cdots=\tilde{\mathbf{x}}[N+k+B-1]=0$.
	\label{def:bandlimitedSignal}
\end{definition}

\begin{proposition}(\cite{bendory2017signal})
	Let $\mathbf{x}\in \mathbb{C}^{N}$ be the underlying signal and let $\tilde{\mathbf{x}}\in \mathbb{C}^{N}$ be its Fourier transform. Let $\mathbf{Z}[p,k]$ be the FROG trace of $\mathbf{x}$ defined as in \eqref{eq:system} for some fixed $L$. Then, the following signals have the same FROG trace as $\mathbf{x}$:
	\begin{enumerate}
		\item the rotated signal $\mathbf{x}e^{i\phi}$ for some $\phi\in \mathbb{R}$;
		\item the translated signal $\mathbf{x}^{\ell}$ obeying $\mathbf{x}^{\ell}[n]=\mathbf{x}[n-\ell]$ for some $\ell \in \mathbb{Z}$ (equivalently, a signal with Fourier transform $\tilde{\mathbf{x}}^{\ell}$ obeying $\tilde{\mathbf{x}}^{\ell}[k] = \tilde{\mathbf{x}}[k]e^{-2\pi i\ell k/N}$ for some $\ell \in \mathbb{Z}$);
		\item the reflected signal $\hat{\mathbf{x}}$ obeying $\hat{\mathbf{x}}[n] := \overline{\mathbf{x}}[-n]$.
	\end{enumerate}
	If $\mathbf{x}$ is a $B$-bandlimited signal for some $B\leq N/2$, then the translation ambiguity is continuous. Namely, any signal with a Fourier transform such that $\tilde{\mathbf{x}}^{\psi}[k] := \tilde{\mathbf{x}}[k]e^{i\psi k}$ for some $\psi \in \mathbb{R}$, has the same FROG trace as $\mathbf{x}$.
	\label{theo:ambiguities}
\end{proposition}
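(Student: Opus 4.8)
The plan is to handle the three discrete symmetries (items 1--3) by direct substitution into the definition \eqref{eq:system}, checking that each replacement changes the inner sum only by a unimodular factor (and, for the reflection, an additional complex conjugation), which is annihilated by $|\cdot|^{2}$. The continuous translation (item 4) is different: I would pass to the frequency domain, write the FROG trace as the squared modulus of a circular convolution of $\tilde{\mathbf{x}}$ with a modulated copy of itself, and use bandlimitedness to guarantee that this circular convolution has no wraparound, so that a fractional modulation of $\tilde{\mathbf{x}}$ factors out as a single common phase.

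For item 1, inserting $\mathbf{x}e^{i\phi}$ into \eqref{eq:system} pulls a factor $e^{2i\phi}$ out of the inner sum. For item 2, inserting $\mathbf{x}^{\ell}[n]=\mathbf{x}[n-\ell]$ and re-indexing by $m=n-\ell$ — legitimate because $\mathbf{x}$ is periodic, so the sum runs over a full period — produces a factor $e^{-2\pi i\ell k/N}$. For item 3, inserting $\hat{\mathbf{x}}[n]=\overline{\mathbf{x}}[-n]$ and re-indexing by $m=-n-pL$ turns the inner sum into $e^{2\pi ipLk/N}$ times the complex conjugate of the original inner sum. In all three cases $|\cdot|^{2}$ removes the spurious factor, so $\mathbf{Z}[p,k]$ is unchanged; these are one-line computations once periodicity is used to shift summation indices freely.

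For item 4, write $\mathbf{Z}[p,k]=|\tilde{a}_{p}[k]|^{2}$ with $a_{p}[n]:=\mathbf{x}[n]\mathbf{x}[n+pL]$. Expanding $a_{p}$ through the inverse DFT of $\tilde{\mathbf{x}}$ gives $\tilde{a}_{p}[k]=\tfrac{1}{N}\sum_{j}\tilde{\mathbf{x}}[j]\,\tilde{\mathbf{x}}[j']\,e^{2\pi ipLj'/N}$, where $j'$ denotes the representative of $k-j$ modulo $N$. When $B\le N/2$ the support of $\tilde{\mathbf{x}}$ is a block of $B$ consecutive integers, so in a nonzero term both $j$ and $j'$ lie in that block; then $j+j'$ ranges over an interval of $2B-1\le N-1$ integers and is therefore one and the same integer for each fixed $k$. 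Replacing $\tilde{\mathbf{x}}[j]$ by $\tilde{\mathbf{x}}^{\psi}[j]=\tilde{\mathbf{x}}[j]e^{i\psi j}$ (with the modulation read on that same window of $N$ indices) multiplies the generic term by $e^{i\psi j}e^{i\psi j'}=e^{i\psi(j+j')}$, which is a phase independent of $j$ because $j+j'$ is constant over the contributing pairs. Hence $\tilde{a}_{p}^{\psi}[k]$ equals $\tilde{a}_{p}[k]$ up to a unimodular factor, and $\mathbf{Z}[p,k]$ is unchanged.

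The one subtle point, and the step on which I would spend care, is precisely this wraparound bookkeeping in item 4: one must verify that on the band of $\tilde{\mathbf{x}}$ the fractional modulation $k\mapsto e^{i\psi k}$ is consistent with the periodicity forced on $\tilde{a}_{p}$, which is exactly what the hypothesis $B\le N/2$ provides — the doubled band has length below $N$, so no index appearing in the convolution sum is identified with a shifted copy of itself. When $B>N/2$ the two factors $e^{i\psi j}$ and $e^{i\psi j'}$ can correspond to indices congruent modulo $N$ yet differing by a nonzero multiple of $N$, the phases fail to cancel, and the translation ambiguity ceases to be continuous; this shows the hypothesis is sharp. Everything else reduces to the elementary substitutions above.
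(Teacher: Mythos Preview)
The paper does not actually prove Proposition~\ref{theo:ambiguities}; it is quoted verbatim from \cite{bendory2017signal} and stated without argument, so there is no ``paper's own proof'' to compare against. Your proposal therefore goes beyond what the present paper supplies.

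That said, your argument is correct. Items 1--3 are exactly the direct substitutions you describe; periodicity of $\mathbf{x}$ lets you shift the summation index freely, and in each case the inner sum in \eqref{eq:system} changes only by a unimodular scalar (together with a conjugation in item~3), which $|\cdot|^{2}$ removes. For item~4 your frequency-domain rewriting $\tilde{a}_{p}[k]=\tfrac{1}{N}\sum_{j}\tilde{\mathbf{x}}[j]\tilde{\mathbf{x}}[k-j]\,e^{2\pi i pL(k-j)/N}$ (circular convolution of $\tilde{\mathbf{x}}$ with a modulated copy) is the standard and natural route, and the key observation---that when the support of $\tilde{\mathbf{x}}$ is a block of length $B\le N/2$, every contributing pair $(j,j')$ with $j+j'\equiv k\pmod N$ has the \emph{same} integer value of $j+j'$---is correctly identified and justified. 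This is precisely why the extra phase $e^{i\psi(j+j')}$ factors out of the sum and disappears under $|\cdot|^{2}$. Your remark that the hypothesis $B\le N/2$ is sharp, because for larger $B$ wraparound produces two distinct integer sums $j+j'$ congruent to $k$ and the phases no longer align, is also correct and worth keeping. The only cosmetic point to tighten is to make explicit the choice of integer representatives for the support block of $\tilde{\mathbf{x}}$ before writing $e^{i\psi j}$, since $k\mapsto e^{i\psi k}$ is not $N$-periodic; you allude to this (``with the modulation read on that same window''), and it deserves one sentence of its own.
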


Our goal is to estimate the signal $\mathbf{x}$, up to trivial ambiguities, from the FROG trace $\mathbf{Z}$. The work \cite{bendory2017signal} established that the pulse $\mathbf{x}$ can be uniquely identified (up to trivial ambiguities) from the FROG trace under rather mild conditions as summarized in the following proposition.

\begin{proposition}(\cite{bendory2017signal})
	Let $\mathbf{x}\in \mathbb{C}^{N}$ be a $B$-bandlimited signal as in Definition \ref{def:bandlimitedSignal} for some $B\leq N/2$. If $N/L\geq 4$, then almost all signals are determined uniquely from their FROG trace $\mathbf{Z}[p,k]$, up to trivial ambiguities, from $m\geq 3B$ measurements. If in addition we have access to the signal's power spectrum and $N/L \geq 3$, then $m\geq2B$ measurements suffice.
	\label{theo:uniqueness}
\end{proposition}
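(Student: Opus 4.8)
\emph{Proof proposal.} The plan is to turn the FROG inversion into a polynomial (fourth-order correlation) problem in the few nonzero Fourier coefficients of $\mathbf{x}$, and then close the argument with a genericity statement. First I would pass to the Fourier domain: using the translation ambiguity of Proposition~\ref{theo:ambiguities}, I may assume that $\tilde{\mathbf{x}}$ is supported on $\{0,\dots,B-1\}$, so that $\mathbf{x}$ is parametrized by the $B$ complex unknowns $c_j := \tilde{\mathbf{x}}[j]$. A short computation from \eqref{eq:system} shows that the bilinear term inside the magnitude equals, up to a constant, $\sum_{j} c_{k-j}\,c_{j}\,z^{j}$ evaluated at the root of unity $z = e^{2\pi i pL/N}$; hence each measurement $\mathbf{Z}[p,k]$ is the squared modulus of a polynomial whose coefficients are quadratic monomials in $c_0,\dots,c_{B-1}$, with the delay index $p$ entering only through powers of $z$.

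Second, expanding the squared modulus exhibits $\mathbf{Z}[p,k]$, for fixed $k$, as a trigonometric polynomial in $p$ whose number of terms is controlled by $B$. The point of recording $R=\lceil N/L\rceil$ delay steps is that, when $R$ is large enough, the full bank of fourth-order correlations of $\mathbf{x}$ can be read off (by an inverse DFT / interpolation over the delay variable, or directly by linear algebra on the system of $\mathbf{Z}[p,k]$'s across $p$). This is exactly the role of the hypotheses $N/L\ge 4$ and, when the power spectrum is available as side information so that one layer of unknowns is eliminated, $N/L\ge 3$; the precise threshold is a counting of how many independent correlation equations the delay samples supply.

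Third, I would prove that this polynomial system pins down $\mathbf{x}$ up to the three ambiguities of Proposition~\ref{theo:ambiguities} for generic $\mathbf{x}$. The concrete route: (i) show that for signals in a Zariski-open set the correlation equations can be solved explicitly --- typically by recovering the consecutive ratios $c_{j+1}/c_{j}$, which determines $\mathbf{x}$ up to global phase, translation and reflection; and (ii) upgrade ``generic'' to ``almost all'' by the standard semialgebraic argument: the set of signals for which the system fails to be invertible is contained in the zero set of a nonzero polynomial (a resultant or Jacobian minor of the system), hence is Lebesgue-null. Tracking the number of independent ratio equations needed is what produces the sample-complexity bounds $m\ge 3B$ in general and $m\ge 2B$ when the power spectrum is known.

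The main obstacle is part (ii): showing that the failure set is a \emph{proper} subvariety --- equivalently, producing at least one explicit witness signal whose FROG trace is uniquely invertible --- and, hand in hand with it, verifying that no non-trivial collisions beyond those listed in Proposition~\ref{theo:ambiguities} can occur for generic $\mathbf{x}$. A secondary subtlety, responsible for the gap between $N/L\ge 4$ and $N/L\ge 3$ (and between $3B$ and $2B$), is correctly accounting for the \emph{continuous} translation ambiguity enjoyed by $B$-bandlimited signals when one counts degrees of freedom against available equations.
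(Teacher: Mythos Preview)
This proposition is not proved in the paper. It is quoted verbatim from \cite{bendory2017signal} (note the attribution in the proposition header) and is used as background input; the present paper supplies no argument for it. So there is no ``paper's own proof'' to compare your proposal against.

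For what it is worth, your sketch is in the right spirit for how this uniqueness result is established in \cite{bendory2017signal}: pass to the Fourier support, rewrite the FROG trace entries as fourth-order polynomial expressions in the nonzero Fourier coefficients, use the delay samples to disentangle enough of those correlations (this is where the thresholds $N/L\ge 4$ and $N/L\ge 3$ enter), and then argue that the resulting polynomial system is invertible up to the listed ambiguities off a proper algebraic subvariety. The weak point you yourself flag --- exhibiting a witness to show the failure locus is a \emph{proper} subvariety, and ruling out non-listed collisions --- is exactly where the work lies in the original paper, and your proposal does not supply that step. But since the present paper simply imports the result, there is nothing here to grade your proposal against.
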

Proposition \ref{theo:uniqueness} has been recently extended to the case of blind ptychography, or blind FROG, in which the goal is to estimate two signals simultaneously \cite{bendory2018blind}. Evidently, Proposition \ref{theo:uniqueness} allows choices of $L>1$ meaning that not all the delay steps are needed to recover the pulse, and therefore a method that works in this regime as well is desired.

To take the ambiguities into account, we measure the relative error between the true signal $\mathbf{x}$ and any $\mathbf{w}\in \mathbb{C}^{N}$ as
\begin{equation}
\text{dist}(\mathbf{x},\mathbf{w}):= \frac{\left\lVert \sqrt{\mathbf{Z}}-\sqrt{\mathbf{W}} \right\rVert_{\text{F}}}{\left\lVert \sqrt{\mathbf{Z}} \right\rVert_{\text{F}}},
\label{eq:distance}
\end{equation}
where $\mathbf{Z}$ is the FROG trace of $\mathbf{x}$ according to \eqref{eq:system}, $\sqrt{\cdot}$ is the point-wise square root, $\mathbf{W}$ is the FROG trace of $\mathbf{w}$, and $\lVert \cdot \rVert_{\text{F}}$ denotes the Frobenius norm. Note that if $\text{dist}(\mathbf{x},\mathbf{w})=0$, and the uniqueness conditions of Proposition \ref{theo:uniqueness} are met, then for almost all signals $\mathbf{w}$ is equal to $\mathbf{x}$ up to trivial ambiguities.

In recent years, many papers have examined the problem of recovering a signal from phaseless quadratic random measurements. A popular approach is to minimize the intensity least-squares objective; see for instance \cite{candWir}. Recent works have shown that minimizing the amplitude least-squares objective leads to better reconstruction under noisy scenarios \cite{pauwels2018fienup,wang2016solving,zhang2016reshaped}. However, the latter cost function is non-smooth and thus may lead to a biased descent direction \cite{8410803}. To overcome the non-smoothness of the objective function, we follow the smoothing strategy proposed in \cite{8410803}. 

The smooth objective to recover the underlying pulse considered in this work is
\begin{align}
	\min_{\mathbf{z}\in \mathbb{C}^{n}} h(\mathbf{z},\mu) = \min_{\mathbf{z}\in \mathbb{C}^{n}} \frac{1}{NR}\sum_{k=0}^{N-1}\sum_{p=0}^{R-1} \ell_{k,p}(\mathbf{z},\mu),
	\label{eq:auxproblem}
\end{align}
where
\begin{equation}
	\small{\ell_{k,p}(\mathbf{z},\mu) := \left\lbrack\varphi_{\mu}\left(\left\lvert\sum_{n=0}^{N-1} \mathbf{z}[n]\mathbf{z}[n+pL]e^{-2\pi ink/N} \right \rvert\right) -\sqrt{\mathbf{Z}[p,k]}\right\rbrack^{2}}.
	\label{eq:ellfunction}
\end{equation}
The function $\varphi_{\mu}: \mathbb{R}\rightarrow \mathbb{R}_{++}$ in \eqref{eq:ellfunction} is defined as $$\varphi_{\mu}(w) := \sqrt{w^2+\mu^{2}},$$ with $\mu\in \mathbb{R}_{++}$ (a tunable parameter). Notice that if $\mu=0$, then \eqref{eq:ellfunction} reduces to the non-smooth formulation. In \cite{wang2016solving}, the authors addressed the non-smoothness by introducing truncation parameters into the gradient step in order to eliminate the errors in the estimated descent direction. However, this procedure can modify the search direction and increase the sample complexity of the phase retrieval problem \cite{8410803}.

In this work we propose a block stochastic gradient algorithm (BSGA) to solve \eqref{eq:auxproblem}, that is initialized by a spectral procedure which requires only a few iterations. Section \ref{sec:algorithm} explains in detail the proposed algorithm.

\section{Reconstruction Algorithm}
\label{sec:algorithm}
In order to solve the optimization problem in \eqref{eq:auxproblem}, we develop a gradient-based algorithm, called BSGA. The algorithm is initialized by the outcome of a spectral method approximating the signal $\mathbf{x}$ which will be explained in Section~\ref{sec:initialization}. 

To refine the initial estimate we use the Wirtinger derivatives as introduced in \cite{hunger2007introduction}. Let us define the vector $\mathbf{f}_{k}^{H}$ as 
\begin{equation}
\mathbf{f}_{k}^{H} := \left[\omega^{-0(k-1)},\omega^{-1(k-1)},\cdots,\omega^{-(n-1)(k-1)}\right],
\label{eq:vectora}
\end{equation}
with $\omega=e^{\frac{2\pi i}{n}}$ the $n$th root of unity. Then, the Wirtinger derivative of $h(\mathbf{z},\mu)$ in \eqref{eq:auxproblem} with respect to $\overline{\mathbf{z}}[\ell]$ is given by
\begin{align}
\hspace{-0.3em}\frac{\partial h(\mathbf{z},\mu)}{\partial \overline{\mathbf{z}}[\ell]} :=  &\frac{1}{NR}\sum_{k=0}^{N-1}\sum_{p=1}^{R-1}\left( \mathbf{f}_{k}^{H}\mathbf{g}_{p} - \upsilon_{k,p} \right) \overline{q}_{\ell,p}e^{2\pi i\ell k/N},
\label{eq:stochStep}
\end{align}
where $\upsilon_{k,p}:= \sqrt{\mathbf{Z}[p,k]}\frac{\mathbf{f}_{k}^{H}\mathbf{g}_{p}}{\varphi_{\mu}\left(\left\lvert\mathbf{f}_{k}^{H}\mathbf{g}_{p} \right \rvert\right)}$, and
\begin{align}
\overline{q}_{\ell,p}:=&\overline{\mathbf{z}}[\ell+p]+\overline{\mathbf{z}}[\ell-p]e^{-2\pi ikp/N},\nonumber\\
\mathbf{g}_{p} :=& \left[\mathbf{z}[0]\mathbf{z}[pL],\cdots,\mathbf{z}[N-1]\mathbf{z}[N-1+pL] \right]^{T}.
\end{align}
The gradient of $h(\mathbf{z},\mu)$ is then
\begin{align}
\frac{\partial h(\mathbf{z},\mu)}{\partial \overline{\mathbf{z}}} := \left[\frac{\partial h(\mathbf{z},\mu)}{\partial \overline{\mathbf{z}}[0]},\cdots,\frac{\partial h(\mathbf{z},\mu)}{\partial \overline{\mathbf{z}}[N-1]}\right]^{H}.
\label{eq:gradient}
\end{align}
Using \eqref{eq:gradient}, we define a standard gradient algorithm, taking the form of
\begin{align}
\mathbf{x}^{(t+1)}:=\mathbf{x}^{(t)} - \alpha \frac{\partial h(\mathbf{x}^{(t)},\mu^{(t)})}{\partial \overline{\mathbf{z}}},
\label{eq:updateforx}
\end{align}
where $\alpha$ is the step size. 

To alleviate the memory requirements and computational complexity required for large $N$, we suggest a block stochastic gradient descent strategy. Instead of calculating \eqref{eq:stochStep}, we choose only a random subset of the sum for each iteration $t$, that is,  
\begin{align}
\mathbf{d}_{\Gamma_{(t)}}[\ell]=\sum_{p,k\in \Gamma_{(t)}}\left( \mathbf{f}_{k}^{H}\mathbf{g}^{(t)}_{p} - \upsilon_{k,p,t} \right) \overline{q}^{(t)}_{\ell,p}e^{2\pi i\ell k/N},
\label{eq:functionh}
\end{align}
where the set $\Gamma_{(t)}$ is chosen uniformly and independently at random at each iteration $t$ from subsets of $\hspace{5em}\{1,\cdots,N \}\times \{1\cdots,R\}$ with cardinality $Q$. Specifically, the gradient in \eqref{eq:gradient} is uniformly sampled using a minibatch of data, in this case of size $Q$ for each step update, such that in expectation is \eqref{eq:stochStep} \cite[page 130]{spall2005introduction}.

As mentioned in Section \ref{sec:algorithm}, choosing $\mu>0$ prevents bias in the update direction. Since the function $h$ is smooth, we are able to construct a descent rule for $\mu$ (Line 13 of Algorithm \ref{alg:smothing}) in order to guarantee convergence to a first-order optimal point, that is, a point with zero gradient, in the vicinity of the solution.

\begin{algorithm}[h]
	\caption{(BSGA) Recovery from the FROG trace}
	\label{alg:smothing}
	\small
	\begin{algorithmic}[1]
		\State{\textbf{Input: }Data $\left\lbrace\mathbf{Z}[p,k]:k=0,\cdots,N-1,p=0,\cdots,R -1 \right\rbrace$. Choose constants $\gamma_{1},\gamma,\alpha\in(0,1)$, $\mu^{(0)}\geq 0$, cardinality $Q\in \{1,\cdots,NR\}$, and tolerance $\epsilon>0$.}
		\Statex{}
		\If{$L=1$}
		\State {Initial point $\mathbf{x}^{(0)} \leftarrow$ Algorithm 2$\left(\mathbf{Z}[p,k],T\right)$.}
		\Else
		\State {Initial point $\mathbf{x}^{(0)} \leftarrow$ Algorithm 3$\left(\mathbf{Z}[p,k],T\right)$.}
		\EndIf
		\Statex{}
		\While{$\left\lVert\mathbf{d}_{\Gamma_{(t)}}\right\rVert_{2}\geq \epsilon$}
		\Statex{Choose $\Gamma_{(t)}$ uniformly at random from the subsets of $\{1,\cdots,N \}\times \{1\cdots,R\}$ with cardinality $Q$ per iteration $t\geq 0$.}
		\State{$\displaystyle\mathbf{x}^{(t+1)}=\mathbf{x}^{(t)} - \alpha \mathbf{d}_{\Gamma_{(t)}} $,}
		\Statex{where}
		\State{$\displaystyle \mathbf{d}_{\Gamma_{(t)}}[\ell]=\sum_{p,k\in \Gamma_{(t)}}\left( \mathbf{f}_{k}^{H}\mathbf{g}^{(t)}_{p} - \upsilon_{k,p,t} \right) \overline{q}^{(t)}_{\ell,p}e^{2\pi i\ell k/N}.$}
		\State{$\upsilon_{k,p,t}= \sqrt{\mathbf{Z}[p,k]}\frac{\mathbf{f}_{k}^{H}\mathbf{g}^{(t)}_{p}}{\varphi_{\mu^{(t)}}\left(\left\lvert\mathbf{f}_{k}^{H}\mathbf{g}^{(t)}_{p} \right \rvert\right)} $.}
		\Statex{}
		\State{$\mathbf{g}^{(t)}_{p} = \left[\mathbf{x}^{(t)}[0]\mathbf{x}^{(t)}[pL],\cdots,\mathbf{x}^{(t)}[N-1]\mathbf{x}^{(t)}[N-1+pL] \right]^{T}$.}
		\Statex{}
		\State{$q^{(t)}_{\ell,p} = \mathbf{x}^{(t)}[\ell+p]+\mathbf{x}^{(t)}[\ell-p]e^{2\pi i kp/N}$.}
		\Statex{}
		\If{$\displaystyle\left\lVert\mathbf{d}_{\Gamma_{(t)}}\right\rVert_{2}\geq \gamma \mu^{(t)}$}
		\State{$\mu^{(t+1)}=\mu^{(t)}$.}
		\Else
		\State{$\mu^{(t+1)} = \gamma_{1}\mu^{(t)}$.}
		\EndIf
		\EndWhile
		\State{\textbf{return: } $\mathbf{x}^{(T)}$.}
	\end{algorithmic}
\end{algorithm}

\begin{theorem} 
Let $\mathbf{x}$ be $B$-bandlimited for some $B\leq N/2$, satisfying $\text{dist}(\mathbf{x},\mathbf{x}^{(t)})\leq \rho$ for some sufficiently small constant $\rho>0$. Suppose that $L=1$ and $\Gamma_{(t)}$ is sampled uniformly at random from all subsets of $\{1,\cdots,N \}\times \{1\cdots,R\}$ with cardinality $Q$, independently for each iteration. Then for almost all signals, Algorithm \ref{alg:smothing} with step size $\alpha\in (0,\frac{2}{U}] $ satisfies
	\begin{align}
	\lim_{t\rightarrow \infty} \mu^{(t)} = 0, \text{ and } \lim_{t\rightarrow \infty}\left\lVert\frac{\partial h(\mathbf{x}^{(t)},\mu^{(t)})}{\partial \overline{\mathbf{z}}}\right\rVert_{2}=0,
	\label{eq:convergence}
	\end{align}
	for some constant $U>0$ depending on $\rho$.
	\label{theo:contraction}
\end{theorem}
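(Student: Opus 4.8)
The plan is to adapt the convergence theory of smoothing gradient methods \cite{8410803} to the amplitude loss \eqref{eq:auxproblem}, exploiting the explicit Wirtinger gradient \eqref{eq:stochStep} and the fact that the hypothesis $\text{dist}(\mathbf{x},\mathbf{x}^{(t)})\le\rho$ confines every iterate to the neighborhood $\mathcal{N}_\rho:=\{\mathbf{z}:\text{dist}(\mathbf{x},\mathbf{z})\le\rho\}$, so that the contraction need not be proved from scratch. The first and central step is to show that, for almost all $B$-bandlimited $\mathbf{x}$ and $\rho$ small enough, the map $\mathbf{z}\mapsto\partial h(\mathbf{z},\mu)/\partial\overline{\mathbf{z}}$ is Lipschitz on $\mathcal{N}_\rho$ with a constant $U$ that can be taken \emph{uniform over} $\mu\in[0,\mu^{(0)}]$. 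The subtlety is that $\varphi_\mu(w)=\sqrt{w^2+\mu^2}$ has curvature of order $1/\mu$ near the origin; but in \eqref{eq:stochStep} the $1/|\mathbf{f}_k^H\mathbf{g}_p|$ factor generated by differentiating a modulus cancels against $\varphi_\mu'$, leaving the bounded factor $1/\varphi_\mu(|\mathbf{f}_k^H\mathbf{g}_p|)$, and the only terms that could still diverge as $\mu\downarrow 0$ carry the prefactor $\sqrt{\mathbf{Z}[p,k]}$. Those terms vanish whenever $\mathbf{Z}[p,k]=0$; whenever $\mathbf{Z}[p,k]>0$ one uses $\big|\,|\mathbf{f}_k^H\mathbf{g}_p(\mathbf{z})|-\sqrt{\mathbf{Z}[p,k]}\,\big|\le\|\sqrt{\mathbf{Z}}-\sqrt{\mathbf{W}}\|_{\text{F}}\le\rho\|\sqrt{\mathbf{Z}}\|_{\text{F}}$ on $\mathcal{N}_\rho$ (with $\mathbf{W}$ the FROG trace of $\mathbf{z}$), so that, once $\rho$ is below a signal-dependent threshold, $|\mathbf{f}_k^H\mathbf{g}_p(\mathbf{z})|\ge\tfrac12\sqrt{\mathbf{Z}[p,k]}$ throughout $\mathcal{N}_\rho$ for those indices; since $\mathbf{g}_p$ is a fixed quadratic polynomial in $\mathbf{z}$, all remaining quantities are then bounded independently of $\mu$. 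This step, together with Propositions~\ref{theo:ambiguities}--\ref{theo:uniqueness} which for $L=1$ make \eqref{eq:distance} equivalent (modulo the trivial ambiguities) to the Euclidean distance, is where the ``almost all signals'' and ``$L=1$'' hypotheses are used.

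Second, using $U$-smoothness on $\mathcal{N}_\rho$, $h(\cdot,\mu)\ge 0$, $\alpha\in(0,2/U]$, and the fact (noted after \eqref{eq:functionh}) that $\mathbf{d}_{\Gamma_{(t)}}$ is, up to a fixed scaling that can be absorbed into $\alpha$, an unbiased estimate of $\partial h/\partial\overline{\mathbf{z}}$ with per-block gradients uniformly bounded on $\mathcal{N}_\rho$, I would derive the conditional descent estimate $\mathbb{E}[\,h(\mathbf{x}^{(t+1)},\mu^{(t)})\mid\text{past}\,]\le h(\mathbf{x}^{(t)},\mu^{(t)})-c\,\|\partial h(\mathbf{x}^{(t)},\mu^{(t)})/\partial\overline{\mathbf{z}}\|_2^2+\sigma_t$ for some $c>0$, where the residual $\sigma_t$ is a minibatch-variance term. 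A key point for what follows is that both $\sigma_t$ and $\|\mathbf{d}_{\Gamma_{(t)}}\|_2$ itself become of order $(\mu^{(t)})^2$ once $(\mathbf{x}^{(t)},\mu^{(t)})$ is close to $(\mathbf{x},0)$: near the solution every per-sample residual $\varphi_{\mu}(|\mathbf{f}_k^H\mathbf{g}_p|)-\sqrt{\mathbf{Z}[p,k]}$ is $O(\mu^2)$ on the support of $\mathbf{Z}$ and $O(\rho)$ off it, which by the previous step makes each per-sample gradient small.

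Third, $\{\mu^{(t)}\}$ is non-increasing by construction, so $\mu^{(t)}\to\bar\mu\ge 0$. If $\bar\mu>0$, then $\mu^{(t)}=\bar\mu$ for all large $t$, which by the update rule of Algorithm~\ref{alg:smothing} forces $\|\mathbf{d}_{\Gamma_{(t)}}\|_2\ge\gamma\bar\mu$ at every such step; but the descent estimate applied with the now-fixed parameter $\bar\mu$ drives the true gradient of $h(\cdot,\bar\mu)$ to zero and $\mathbf{x}^{(t)}$ into a small ball around $\mathbf{x}$, on which $\|\mathbf{d}_{\Gamma_{(t)}}\|_2=O(\bar\mu^2)<\gamma\bar\mu$ once $\bar\mu$ is below a threshold depending on $\gamma$ and $Q$ --- a contradiction. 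Hence $\mu^{(t)}\to 0$, the first limit in \eqref{eq:convergence}. Finally, $\mu$ decreases only along steps where $\|\mathbf{d}_{\Gamma_{(t)}}\|_2<\gamma\mu^{(t)}$, which therefore occur infinitely often; along that subsequence $\|\mathbf{d}_{\Gamma_{(t_j)}}\|_2<\gamma\mu^{(t_j)}\to 0$, and taking expectations and using unbiasedness together with the vanishing variance gives $\|\partial h(\mathbf{x}^{(t_j)},\mu^{(t_j)})/\partial\overline{\mathbf{z}}\|_2\to 0$; since $h$ cannot increase in expectation and the gradient stays controlled between two consecutive decreases of $\mu$, this upgrades to $\|\partial h(\mathbf{x}^{(t)},\mu^{(t)})/\partial\overline{\mathbf{z}}\|_2\to 0$, the second limit in \eqref{eq:convergence}.

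The main obstacle is the first step together with the variance bookkeeping in the last two: making $U$ genuinely independent of $\mu$ as $\mu\downarrow 0$ --- which rests on the cancellation of the modulus singularity and on the uniform lower bound for $|\mathbf{f}_k^H\mathbf{g}_p|$ near the solution valid for almost all signals --- and quantifying how the \emph{realized} block gradient $\mathbf{d}_{\Gamma_{(t)}}$, rather than its expectation, is driven below the threshold $\gamma\mu^{(t)}$ that governs the update of the smoothing parameter. Both rely on the fine structure of the smoothed amplitude loss rather than on generic stochastic-optimization arguments.
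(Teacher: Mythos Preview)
Your route differs substantially from the paper's. The paper does not attempt to make the Lipschitz constant uniform in $\mu$: its Lemma~1 establishes (i) $h\ge 0$, (ii) compactness of $\mathcal{J}:=\{\mathbf{z}:\text{dist}(\mathbf{x},\mathbf{z})\le\rho\}$, (iii) a Lipschitz bound on $\partial h/\partial\overline{\mathbf{z}}$ whose constant \emph{explicitly contains factors} $1/\mu$ and $1/\mu^2$ (see the paper's $r_{k,p}$), and (iv) a crude variance bound obtained from (iii). With these four properties in hand the paper simply invokes \cite[Theorem~2.1]{ghadimi2013stochastic} as a black box: if the set $\mathcal{K}_1=\{t:\mu^{(t+1)}=\gamma_1\mu^{(t)}\}$ were finite, $\mu$ would eventually freeze at some $\grave\mu$, Ghadimi--Lan would force $\|\nabla h(\mathbf{x}^{(t)},\grave\mu)\|\to 0$, contradicting $\|\mathbf{d}_{\Gamma_{(t)}}\|\ge\gamma\grave\mu$; hence $\mathcal{K}_1$ is infinite, $\mu^{(t)}\to 0$, and a one-line Jensen inequality along $\mathcal{K}_1$ gives $\|\nabla h\|\le\mathbb{E}\|\mathbf{d}_{\Gamma_{(t)}}\|\le\gamma\mu^{(t)}\to 0$. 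The ``almost all signals'' clause enters only through a short auxiliary lemma showing $|\mathbf{f}_k^H\mathbf{g}_p(\mathbf{z})|>0$ generically on $\mathcal{J}$; the lower bound $|\mathbf{f}_k^H\mathbf{g}_p(\mathbf{z})|\ge\tfrac12\sqrt{\mathbf{Z}[p,k]}$ you work to obtain is never used, and neither is any $O(\mu^2)$ residual estimate.

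Your third step has a genuine gap. You argue that if $\bar\mu>0$ then the descent estimate drives $\nabla h(\cdot,\bar\mu)$ to zero \emph{and} drives $\mathbf{x}^{(t)}$ into a small ball around $\mathbf{x}$, whence $\|\mathbf{d}_{\Gamma_{(t)}}\|=O(\bar\mu^2)<\gamma\bar\mu$ ``once $\bar\mu$ is below a threshold''. Two problems: (a) $\nabla h(\mathbf{x}^{(t)},\bar\mu)\to 0$ does not by itself give $\mathbf{x}^{(t)}\to\mathbf{x}$, since for $\bar\mu>0$ the true signal $\mathbf{x}$ is not even a critical point of $h(\cdot,\bar\mu)$ (the residuals $\varphi_{\bar\mu}(\sqrt{\mathbf{Z}})-\sqrt{\mathbf{Z}}$ are nonzero), so you would need an additional localization argument you have not supplied; and (b) $\bar\mu$ is a limit determined by the algorithm, not a free parameter, so ``once $\bar\mu$ is below a threshold'' is not something you can enforce---the contradiction must work for \emph{every} $\bar\mu\in(0,\mu^{(0)}]$. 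The paper sidesteps both issues by never trying to locate $\mathbf{x}^{(t)}$ or bound $\bar\mu$: it gets the contradiction directly from $\|\nabla h\|\to 0$ versus the standing lower bound on $\|\mathbf{d}_{\Gamma_{(t)}}\|$. Your uniform-in-$\mu$ Lipschitz analysis is therefore more work than needed, and your contradiction mechanism is the part that would actually fail as written.
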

\begin{proof}
	See Appendix \ref{app:prooftheo4}.
\end{proof}
%

\section{Initialization Algorithm}
\label{sec:initialization}
In this section we devise a method to initialize the gradient iterations. This strategy approximates the signal $\mathbf{x}$ from the FROG trace as the leading eigenvector of a carefully designed matrix. We divide the exposition of the initialization procedure into two cases, $L=1$ and $L>1$, explained in Sections \ref{sub:initializationStep} and \ref{sub:initiLg1}, respectively.

\subsection{Initialization for $L=1$}
\label{sub:initializationStep}
Instead of directly dealing with the FROG trace in \eqref{eq:system}, we consider the acquired data in a transformed domain by taking its 1D DFT with respect to the frequency variable (normalized by $1/N$). Our measurement model is then
\begin{align}
&\mathbf{Y}[p,\ell] = \frac{1}{N}\sum_{k=0}^{N-1}\mathbf{Z}[p,k]e^{-2\pi i k\ell/N} \nonumber\\
=& \frac{1}{N}\sum_{k,n,m=0}^{N-1}\mathbf{x}[n]\overline{\mathbf{x}}[m]\mathbf{x}[n+pL]\overline{\mathbf{x}}[m+pL]e^{-2\pi ik\frac{(m-n-\ell)}{N}}\nonumber\\
=&\sum_{n=0}^{N-1}\mathbf{x}[n]\overline{\mathbf{x}}[n+\ell]\mathbf{x}[n+pL]\overline{\mathbf{x}}[n+\ell+pL],
\label{eq:system1}
\end{align}
where $p,\ell=0,\cdots,N-1$. Observe that for fixed $p$, $\mathbf{Y}[p,\ell]$ is the autocorrelation of $\mathbf{x}\odot\mathbf{x}_{pL}$, where  $\mathbf{x}_{pL}[n]=\mathbf{x}[n+pL]$.

Let $\mathbf{D}_{pL}\in \mathbb{C}^{N\times N}$ be a diagonal matrix composed of the entries of $\mathbf{x}_{pL}$, and let $\mathbf{C}_{\ell}$ be a circulant matrix that shifts the entries of a vector by $\ell$ locations, namely, $(\mathbf{C}_{\ell}\mathbf{x})[n]=\mathbf{x}[n+\ell]$. Then, the matrix $\mathbf{X}:=\mathbf{x}\mathbf{x}^{H}$ is linearly mapped to $\mathbf{Y}[p,\ell]$ as follows:
\begin{align}
\mathbf{Y}[p,\ell] &= \left(\mathbf{D}_{pL+\ell}\overline{\mathbf{D}}_{pL}\mathbf{C}_{\ell}\mathbf{x}\right)^{H}\mathbf{x} = \mathbf{x}^{H}\mathbf{A}_{p,\ell}\mathbf{x}\nonumber\\
&=tr(\mathbf{X}\mathbf{A}_{p,\ell}),
\label{eq:systemIniti}
\end{align}
where $\mathbf{A}_{p,\ell} = \mathbf{C}_{-\ell}\mathbf{D}_{pL}\overline{\mathbf{D}}_{pL+\ell}$, and $tr(\cdot)$ denotes the trace function. Observe that $\mathbf{C}_{\ell}^{T}=\mathbf{C}_{-\ell}$. Thus, we have that
\begin{align}
\mathbf{y}_{\ell} = \mathbf{G}_{\ell}\mathbf{x}_{\ell},
\label{eq:initiFinal}
\end{align}
for a fixed $\ell\in \{0,\cdots,N-1\}$, where $\mathbf{y}_{\ell}[n]=\mathbf{Y}[n,\ell]$ and $\mathbf{x}_{\ell} = \text{diag}(\mathbf{X},\ell)$. The $(p,n)$th entry of the matrix $\mathbf{G}_{\ell}\in \mathbb{C}^{\lceil\frac{N}{L} \rceil\times N}$ is given by 
\begin{align}
\mathbf{G}_{\ell}[p,n] := \mathbf{x}_{pL}[n]\overline{\mathbf{x}}_{pL}[n+\ell].
\label{eq:matrixG}
\end{align}
Since $L=1$, it follows from \eqref{eq:matrixG} that $\mathbf{G}_{\ell}$ is a circulant matrix. Therefore, $\mathbf{G}_{\ell}$ is invertible if and only if the DFT of its first column, in this case $\mathbf{x}\odot (\mathbf{C}_{\ell}\overline{\mathbf{x}})$, is non-vanishing.

Using \eqref{eq:initiFinal}, we propose a method to estimate the signal $\mathbf{x}$ from measurements \eqref{eq:system} using an alternating scheme: fixing $\mathbf{G}_{\ell}$, solving for $\mathbf{x}_{\ell}$, updating $\mathbf{G}_{\ell}$ and so forth. The new methodology proposed in \cite{bendory2018non} cannot be directly employed since here the matrices $\mathbf{G}_{\ell}$ are also unknown. Thus, our approach estimates the matrices $\mathbf{G}_{\ell}$ together with $\mathbf{x}_{\ell}$.

We start the alternating scheme by the initialization suggested in \cite{sidorenko2016ptychographic}
\begin{align}
\mathbf{x}_{ini\_pty}[r]:=\mathbf{v}[r] \exp(i\boldsymbol{\theta}[r]),
\label{eq:initialvector}
\end{align}
where $\boldsymbol{\theta}[r]\in [0,2\pi)$ is chosen uniformly at random for all $r\in \{0,\cdots,N-1\}$. The $r$th entry of $\mathbf{v}$ corresponds to the summation of the measured FROG trace over the frequency axis:
\begin{align}
\mathbf{v}[r] &:= \frac{1}{N}\sum_{k=0}^{N-1}\mathbf{Z}[r,k] = \sum_{k=0}^{N-1}\left\lvert\sum_{n=0}^{N-1} \mathbf{x}[n]\mathbf{x}[n+rL]e^{-2\pi ink/N} \right \rvert^{2} \nonumber\\
&:=\sum_{n=0}^{N-1} \lvert\mathbf{x}[n]\rvert^{2}\lvert \mathbf{x}[n+rL]\rvert^{2}.
\label{eq:vectorv}
\end{align}
Once the vector $\mathbf{x}_{ini\_ pty}$ is constructed, the vectors $\mathbf{x}_{\ell}^{(t)}$ at $t=0$ can be built as 
\begin{align}
	\mathbf{x}_{\ell}^{(0)}=\text{diag}(\mathbf{X}_{0}^{(0)},\ell),
	\label{eq:initialguess}
\end{align}
where 
\begin{equation}
	\mathbf{X}_{0}^{(0)}=\mathbf{x}_{ini\_pty}\mathbf{x}_{ini\_pty}^{H}.
	\label{eq:X0}
\end{equation}
Then, from \eqref{eq:initialguess} we proceed with an alternating procedure between estimating the matrix $\mathbf{G}_{\ell}$, and updating the vector $\mathbf{x}_{\ell}$ as follows.

\begin{itemize}[leftmargin=*]

	\item \textit{Update rule for $\mathbf{G}_{\ell}$: }In order to update $\mathbf{G}_{\ell}$, we update the matrix $\mathbf{X}_{0}^{(t)}$ as
	\begin{align}
	\text{diag}(\mathbf{X}_{0}^{(t)},\ell) = \mathbf{x}_{\ell}^{(t)}.
	\label{eq:solZ0}
	\end{align}
	
	Observe that if $\mathbf{x}_\ell^{(t)}$ is close to $\mathbf{x}_\ell$ for all $\ell$, then $\mathbf{X}_0^{(t)}$ is close to $\mathbf{x}\mathbf{x}^H$. Letting $\mathbf{w}^{(t)}$ be the leading (unit-norm) eigenvector of the matrix $\mathbf{X}_{0}^{(t)}$ constructed in \eqref{eq:solZ0}, from \eqref{eq:matrixG} each matrix $\mathbf{G}^{(t)}_{\ell}$ at iteration $t$ is given by
	\begin{align}
	\mathbf{G}_{\ell}^{(t)}[p,n] = \mathbf{x}_{pL}^{(t)}[n]\overline{\mathbf{x}}^{(t)}_{pL}[n+\ell],
	\label{eq:matrixG1}
	\end{align}
	where $\mathbf{x}^{(t)}_{pL}[n]= \mathbf{w}^{(t)}[n+pL]$. \vspace{0.5em}
	
	\item \textit{Optimization with respect to $\mathbf{x}_{\ell}$: }Fixing $\mathbf{G}_{\ell}^{(t-1)}$, one can estimate $\mathbf{x}^{(t)}_{\ell}$ at iteration $t$ by solving the linear least-squares (LS) problem 
	\begin{align}
	\min_{\mathbf{p}_{\ell}\in \mathbb{C}^{N}} & \hspace{1em} \lVert \mathbf{y}_{\ell}-\mathbf{G}^{(t-1)}_{\ell}\mathbf{p}_{\ell} \rVert_{2}^{2}.
	\label{eq:problemInitiz}
	\end{align}
	The relationship between the vectors $\mathbf{x}^{(t)}_{\ell}$ is ignored at this stage. If $\mathbf{G}_{\ell}^{(t-1)}$ is invertible, then the solution to this problem is given by $(\mathbf{G}_{\ell}^{(t-1)})^{-1}\mathbf{y}_{\ell}$. Since $\mathbf{G}_{\ell}^{(t-1)}$ is a circulant matrix, it is invertible if and only if the DFT of $\mathbf{x}^{(t-1)}\odot (\mathbf{C}_{\ell}\overline{\mathbf{x}}^{(t-1)})$ is non-vanishing. This condition cannot be ensured in general. Thus, we propose a surrogate proximal optimization problem to estimate $\mathbf{x}_{\ell}^{(t)}$ by
	\begin{align}
	\min_{\mathbf{p}_{\ell}\in \mathbb{C}^{N}} & \hspace{1em} \lVert \mathbf{y}_{\ell}-\mathbf{G}^{(t-1)}_{\ell}\mathbf{p}_{\ell} \rVert_{2}^{2} + \frac{1}{2\lambda}\lVert \mathbf{p}_{\ell} - \mathbf{x}_{\ell}^{(t-1)} \rVert_{2}^{2},
	\label{eq:problemInitiz1}
	\end{align}
	where $\lambda>0$ is a regularization parameter. In practice $\lambda$ is a tunable parameter \cite{parikh2014proximal}. In particular, for this work the value of $\lambda$ was determined using a cross-validation strategy such that each simulation uses the value that results in the smallest relative error according to \eqref{eq:distance}. The surrogate optimization problem in \eqref{eq:problemInitiz1} is strongly convex~\cite{parikh2014proximal}, and admits the following closed form solution
	\begin{align}
	\mathbf{x}_{\ell}^{(t)} = \mathbf{B}_{\ell,t}^{-1} \mathbf{e}_{\ell,t},
	\label{eq:solZl}
	\end{align}
	where
	\begin{align}
	\mathbf{B}_{\ell,t} &= \left(\mathbf{G}^{(t-1)}_{\ell}\right)^{H}\left(\mathbf{G}^{(t-1)}_{\ell}\right) + \frac{1}{2\lambda}\mathbf{I}, \nonumber\\
	\mathbf{e}_{\ell,t}&=\left(\mathbf{G}^{(t)}_{\ell}\right)^{H}\mathbf{y}_{\ell} + \frac{1}{2\lambda}\mathbf{x}^{(t-1)}_{\ell},
	\label{eq:auxMatrices}
	\end{align}
	with $\mathbf{I}\in \mathbb{R}^{N\times N}$ the identity matrix. Clearly $\mathbf{B}_{\ell,t}$ in \eqref{eq:auxMatrices} is always invertible. The update step for each $\mathbf{x}_{\ell}^{(t)}$ is computed in Line 9 of Algorithm \ref{alg:initialization}.
\end{itemize}\vspace{0.5em}

\begin{algorithm}[t!]
	\caption{Initialization Procedure $L=1$}
	\label{alg:initialization}
	\small
	\begin{algorithmic}[1]
		\State{\textbf{Input: }The measurements $\mathbf{Z}[p,k]$, $T$ the number of iterations, and $\lambda>0$. }
		\State{\textbf{Output:} $\mathbf{x}^{(0)}$ (estimation of $\mathbf{x}$).}
		\State{\textbf{Initialize: }$\mathbf{x}_{ini\_ pty}[r]=\mathbf{v}[r] \exp(i\boldsymbol{\theta}[r])$, and $\displaystyle\mathbf{v}[r]=\frac{1}{N}\sum_{k=0}^{N-1}\mathbf{Z}[r,k]$, $\boldsymbol{\theta}[r]\in [0,2\pi)$ is chosen uniformly and independently at random.}
		\State{Compute $\mathbf{Y}[p,\ell]$ the 1D inverse DFT with respect to $k$}
		\Statex{of $\mathbf{Z}[p,k]$.}
		\For{$t=1$ to $T$}
		\State{Construct $\mathbf{G}^{(t)}_{\ell}$ according to \eqref{eq:matrixG1}.}
		\State{Compute $\mathbf{B}_{\ell,t}= (\mathbf{G}^{(t)}_{\ell})^{H}(\mathbf{G}^{(t)}_{\ell}) + \frac{1}{2\lambda}\mathbf{I}$.}
		\State{Compute $\mathbf{e}_{\ell,t}=(\mathbf{G}^{(t)}_{\ell})^{H}\mathbf{y}_{\ell} + \frac{1}{2\lambda}\mathbf{x}^{(t-1)}_{\ell}$.}
		\State{Construct the matrix $\mathbf{X}^{(t)}_{0}$ such that
			\begin{equation*}
			\text{diag}(\mathbf{X}^{(t)}_{0},\ell)=
			\mathbf{B}_{\ell,t}^{-1} \mathbf{e}_{\ell,t}, \hspace{1em} \ell=0,\cdots,N-1.
			\end{equation*}}
		\State{Let $\mathbf{w}^{(t)}$ be the leading (unit-norm) eigenvector of $\mathbf{X}^{(t)}_{0}$.}
		\State{Take $\mathbf{x}^{(t)}_{pL}[n]= \mathbf{w}^{(t)}[n+pL]$.}
		\EndFor
		\State{Compute vector $\mathbf{x}^{(0)}$ as
			\begin{equation*}
			\mathbf{x}^{(0)} :=\sqrt{\sum_{n\in \mathcal{S}}\left(\mathbf{B}_{0,T}^{-1} \mathbf{e}_{0,T}\right)[n]}\mathbf{w}^{(T)},
			\end{equation*}
			where $\mathcal{S}:=\left\lbrace n: \left(\mathbf{B}_{0,T}^{-1} \mathbf{e}_{0,T}\right)[n]>0\right\rbrace$.}
		\State{\textbf{return: }$\mathbf{x}^{(0)}$.}
	\end{algorithmic}
\end{algorithm}

Finally, in order to estimate $\mathbf{x}$, the (unit-norm) principal eigenvector of $\mathbf{X}_{0}^{(T)}$ is normalized by
\begin{align}
	\beta=\sqrt{\sum_{n\in \mathcal{S}}\left(\mathbf{B}_{0,T}^{-1} \mathbf{e}_{0,T}\right)[n]},
	\label{eq:normestimated}
\end{align}
where $\mathcal{S}:=\left\lbrace n: \left(\mathbf{B}_{0,T}^{-1} \mathbf{e}_{0,T}\right)[n]>0\right\rbrace$. Observe that \eqref{eq:normestimated} results from the fact that $\sum_{n=0}^{N-1}\text{diag}(\mathbf{X},0)[n]=\lVert\mathbf{x}\rVert^{2}_{2}$. 

After a few iterations of this two-step procedure, the output is used to initialize the gradient algorithm described in Section \ref{sec:algorithm}. This alternating scheme is summarized in Algorithm \ref{alg:initialization}.


\subsection{FROG initialization step for $L>1$}
\label{sub:initiLg1}
Until now we focused on the case $L = 1$. If $L>1$, then the linear system in \eqref{eq:initiFinal} is underdetermined and $\mathbf{y}_{\ell}$ can be viewed as a subsampled version of \eqref{eq:system1} by a factor $L$. Therefore, in order to increase the number of equations when $L>1$, we up-sample $\mathbf{y}_{\ell}$ by a factor $L$. Specifically, we follow the proposed scheme in \cite{bendory2018non} that expands the measurement vector $\mathbf{y}_{\ell}$ by low-pass interpolation. Once the measurements are upsampled, we proceed as for $L = 1$. This initialization, for $L>1$, is summarized in Algorithm \ref{alg:initialization1}. From Line 3 to Line 5 the low-pass interpolation by a factor $L$ is computed, and then in Line 6, Algorithm \ref{alg:initialization} generates the initial estimation of the underlying signal.

\begin{algorithm}[H]
	\caption{Initialization Procedure $L>1$}
	\label{alg:initialization1}
	\small
	\begin{algorithmic}[1]
		\State{\textbf{Input: }The measurements $\mathbf{Z}[p,k]$, $T$ the number of iterations, and a smooth interpolation filter $\mathbf{s}_{L}$ that approximates a lowpass filter with bandwidth $\lceil N/L \rceil$.}
		\State{\textbf{Output:} $\mathbf{x}^{(0)}$ (estimation of $\mathbf{x}$).}
		\State{Compute $\mathbf{Y}[p,\ell]$ as the 1D DFT with respect to $k$}
		\Statex{of $\mathbf{Z}[p,k]$.}
		\State{\vspace{-1.2em}\begin{itemize}
				\item \textit{Expansion: }
				\begin{equation*}
					\grave{\mathbf{y}}_{\ell}[n] = \left \lbrace\begin{array}{ll}
					\mathbf{y}_{\ell}[p] & \text{ if } n=pL\\
					0& \text{ otherwise. }
					\end{array}\right.
				\end{equation*}
				\item \textit{Interpolation: }
				\begin{equation*}
					\mathbf{y}^{(I)}_{\ell} = \grave{\mathbf{y}}_{\ell}*\mathbf{s}_{L}.
				\end{equation*}
			\end{itemize}}
		\State{Compute $\mathbf{Y}^{(I)}[p,\ell]=\mathbf{y}^{(I)}_{\ell}[p]$}.
		\Statex{}
		\State{Compute $\mathbf{Z}^{(I)}[p,k] = \lvert \tilde{\mathbf{Y}}^{(I)}[p,k] \rvert^{2}$ where $\tilde{\mathbf{Y}}^{(I)}[p,k]$ is the 1D inverse DFT with respect to $\ell$ of $\mathbf{Y}^{(I)}[p,\ell]$.}
		\State{Compute $\mathbf{x}^{(0)}\leftarrow \text{Algorithm 2}(\mathbf{Z}^{(I)},T)$}
		\State{\textbf{return: }$\mathbf{x}^{(0)}$.}\vspace{0.4em}
	\end{algorithmic}
\end{algorithm}
%

\section{Numerical Results}
\label{sec:numerical}
This section evaluates the numerical performance of BSGA and compares the results with the stochastic gradient algorithm Ptych proposed in \cite{sidorenko2016ptychographic}. We used the following parameters for Algorithm \ref{alg:smothing}:  $\gamma_{1}=0.1$, $\gamma=0.1$, $\alpha = 0.6$, $\mu_{0}=65$, and $\epsilon=1\times 10^{-10}$. The number of indices that are chosen uniformly at random is fixed as $Q=N$. A cubic interpolation was used in Algorithm \ref{alg:initialization1} (see Line 4), and the regularization parameter was fixed to $\lambda=0.5$.

Five tests were conducted to evaluate the performance of the proposed method under noisy and noiseless scenarios at different values of signal-to-noise-ratio (SNR), defined as SNR$ = 10\log_{10}(\lVert \mathbf{Z} \rVert^{2}_{\text{F}}/\lVert \boldsymbol{\sigma} \rVert^{2}_{\text{2}})$, where $\boldsymbol{\sigma}$ is the variance of the noise. First, we examine the empirical success rate of BSGA for different values of $L$. The second experiment assesses the performance of the initialization technique and its impact on the reconstruction quality. Third, we show several examples of reconstructed pulses attained with BSGA and Ptych under noisy and noiseless scenarios, when the complete FROG trace is used. The fourth experiment investigates the performance of the proposed method and Ptych in reconstructing the pulses when $L>1$ and the FROG trace is corrupted by noise. The last test compares the computational complexity between the reconstruction methods in terms of their running time to reach a given relative error.

The signals used in the simulations were constructed as follows. For all tests, we built a set of $\left\lceil \frac{N-1}{2} \right\rceil$-bandlimited pulses that conform to a Gaussian power spectrum centered at 800 nm. Specifically, each pulse ($N=128$ grid points) is produced via the Fourier transform of a complex vector with a Gaussian-shaped amplitude with a cutoff frequency of $150$ femtoseconds$^{-1}$ (fsec$^{-1}$). Next, we multiply the obtained power spectrum by a uniformly distributed random phase. In the experiments we used the inverse Fourier of this signal as the underlying pulse.

All simulations were implemented in Matlab R2019a on an Intel Core i7 3.41Ghz CPU with 32 GB RAM. The code for BSGA is publicly available at \url{https://github.com/samuelpinilla/FROG}. The code of Ptych was downloaded from the authors' website\footnote{\url{https://oren.net.technion.ac.il/homepage/}}.

\subsection{Empirical Probability of Success}
This section numerically evaluates the success rate of BSGA. To this end, BSGA and Ptych are initialized at $\mathbf{x}^{(0)} = \mathbf{x}+\delta\zeta$, where $\delta$ is a fixed constant and $\zeta$ takes values on $\{-1,1\}$ with equal probability, while $L$ ranges from 1 to 6. A trial is declared successful when the returned estimate attains a relative error as in \eqref{eq:distance} that is smaller than $10^{-6}$. We numerically determine the empirical success rate among 100 trials. Fig. \ref{fig:orthoini} summarizes these results, and shows that BSGA performs better than Ptych, since it is able to retrieve the signal for larger values of $L$.

\begin{figure}[h]
	\centering
	\includegraphics[width=1\linewidth]{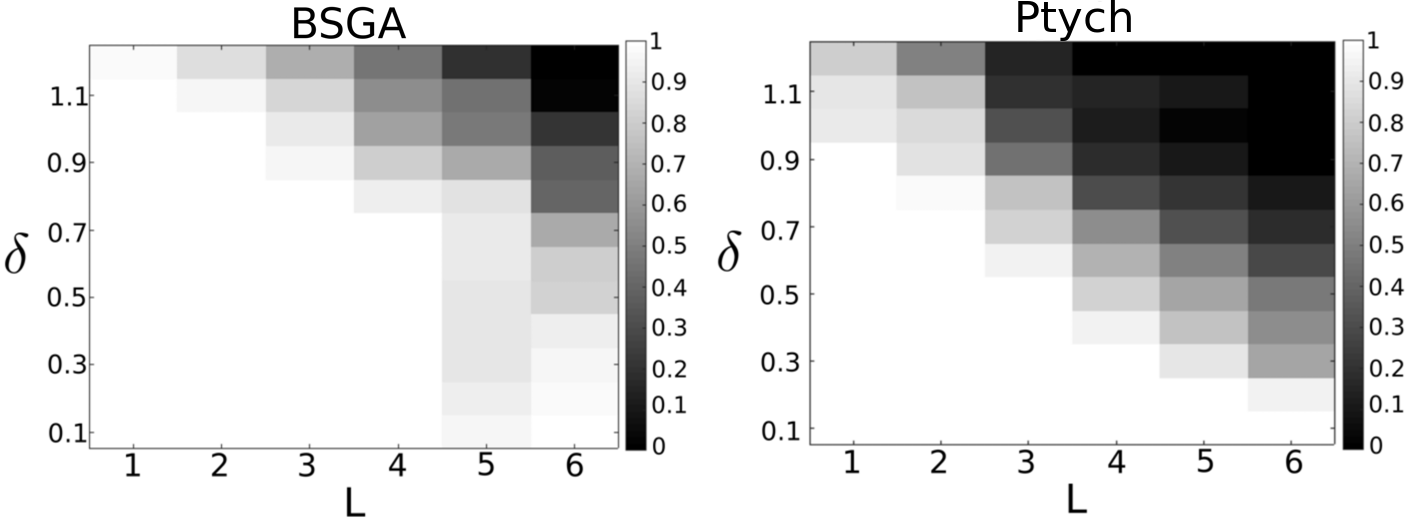}
	\caption{\footnotesize{Empirical success rate comparison between BSGA and Ptych as a function of $L$ and $\delta$ in the absence of noise.}}
	\label{fig:orthoini}
\end{figure}

\begin{figure}[t!]
	\centering
	\includegraphics[width=1\linewidth]{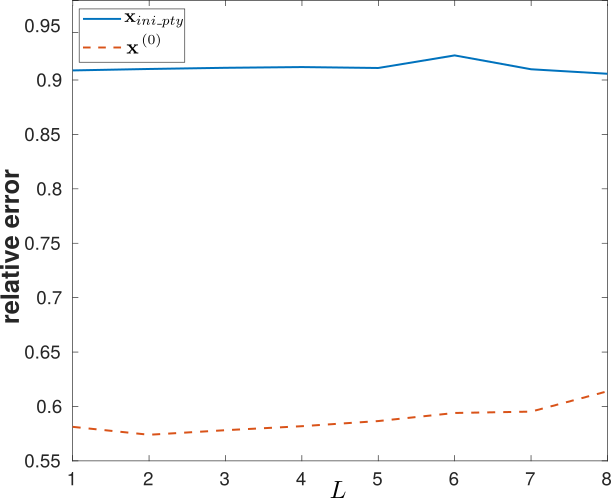}
	\caption{\footnotesize{Relative error comparison between the initial vector $\mathbf{x}_{ini\_ pty}$ as defined in \eqref{eq:initialvector}, and the returned initial guess $\mathbf{x}^{(0)}$ for different values of $L$ in the absence of noise. For each value of $L$, an average of the relative error was computed among 100 trials.}}
	\label{fig:initerrors}
\end{figure}
\begin{figure}[b!]
	\centering
	\includegraphics[width=1\linewidth]{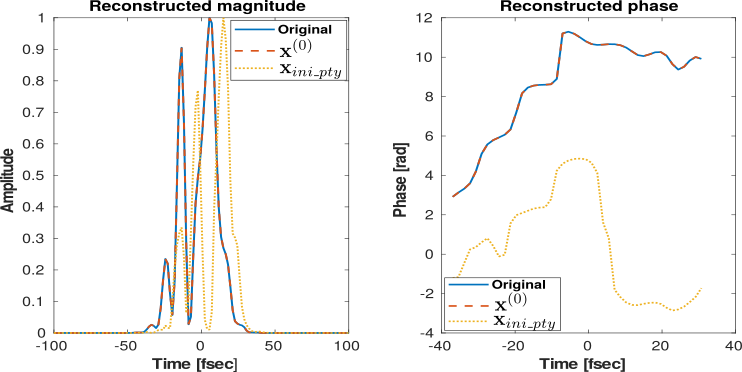}
	\caption{\footnotesize{Reconstructed pulses from the FROG trace with $L=4$ using Algorithm \ref{alg:smothing} initialized by $\mathbf{x}_{ini\_ pty}$ and the returned vector $\mathbf{x}^{(0)}$ using Algorithm~\ref{alg:initialization1}.}}
	\label{fig:randomInit}
	\vspace{-0.5em}
\end{figure}

\subsection{Relative Error of the Initialization Procedure}
This section examines the impact of the designed initialization described in Algorithms \ref{alg:initialization} and \ref{alg:initialization1}, under noisy and noiseless scenarios. We compare the relative error between the starting vector in \eqref{eq:initialvector}, and the returned solution $\mathbf{x}^{(0)}$ of the proposed initialization procedure. The number of iterations to attain the vector $\mathbf{x}^{(0)}$ using the designed initialization was fixed as $T=2$, and we numerically determine the relative error averaged over 100 trials. These numerical results are summarized in Fig. \ref{fig:initerrors}, and indicate that the proposed initialization algorithm outperforms $\mathbf{x}_{ini\_ pty}$.

In order to illustrate the effect of the initial guesses, we ran Algorithm \ref{alg:smothing} initialized by $\mathbf{x}_{ini\_ pty}$ and $\mathbf{x}^{(0)}$ with $L=4$. Fig. \ref{fig:randomInit} shows the attained reconstructions. Notice that the proposed reconstruction algorithm fails in estimating the input pulse when it was initialized by $\mathbf{x}_{ini\_ pty}$.

We numerically determine the performance of the proposed initialization at different SNR levels, with $L$ ranging from 1 to 8. Specifically, we added white noise to the FROG measurements at different SNR levels: SNR = 8dB, 12dB, 16dB and 20dB. Fig. \ref{fig:init} displays the relative error attained by the proposed initialization for different SNR and~$L$ values.

\begin{figure}[t!]
	\centering
	\includegraphics[width=0.9\linewidth]{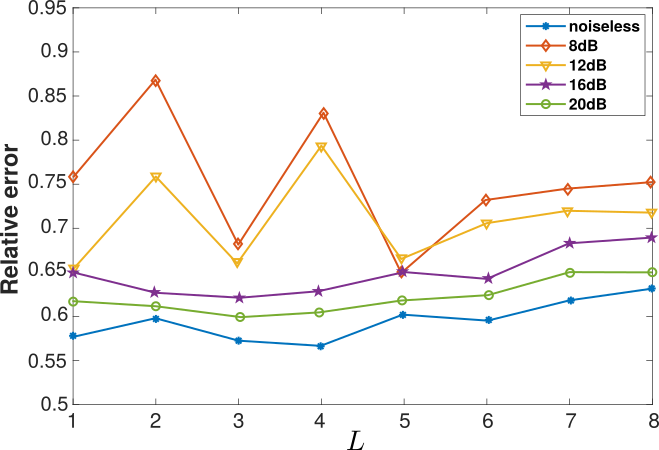}
	\caption{\footnotesize{Performance of the proposed initialization described in Algorithms \ref{alg:initialization} and \ref{alg:initialization1} at different SNR levels, with $L$ ranging from 1 to 8. For each value of $L$, the relative error was averaged over 100 trials.}}
	\label{fig:init}
\end{figure}

From Fig. \ref{fig:init} it can be seen that the returned initialization at levels of SNR $\leq 16$dB is, approximately, independent of the value of $L$ when $L\leq 6$. Combining these numerical results with Fig. \ref{fig:orthoini}, we conclude that BSGA is able to better estimate the underlying pulse (up to trivial ambiguities) if $L\leq4$ for both noiseless and noisy scenarios compared to Ptych.

Finally, we numerically determine the empirical success rate of BSGA with increasing $L$, in the absence of noise, when Algorithm \ref{alg:smothing} is initialized with $\mathbf{x}_{ini\_ pty}$, a random vector and $\mathbf{x}^{(0)}$. A trial is declared successful when the returned estimate attains a relative error as in \eqref{eq:distance} that is smaller than $1\times 10^{-6}$. The results are summarized in Fig. \ref{fig:success}, where the number of iterations that BSGA requires to reach the given relative error for $L=1$ is also presented. The success rate and the number of iterations are averaged over 100 pulses. The reported results show the effectiveness of Algorithm \ref{alg:smothing} when it is initialized by $\mathbf{x}^{(0)}$ for $L>1$.
\begin{figure}[h]
	\centering
	\includegraphics[width=0.8\linewidth]{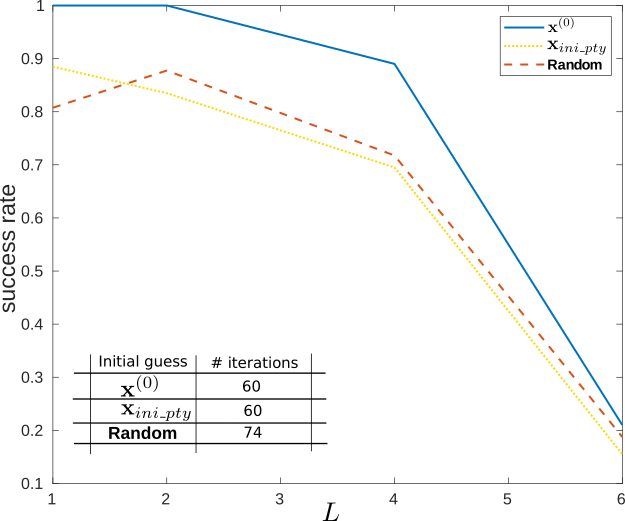}
	\caption{\footnotesize{Empirical success rate of Algorithm \ref{alg:smothing} when it is initialized by $\mathbf{x}^{(0)}$, $\mathbf{x}_{ini\_ pty}$ and a random vector as a function of $L$ in the absence of noise.}}
	\label{fig:success}
\end{figure}

\subsection{Pulse Reconstruction Examples for $L=1$}
In this section we show the performance of BSGA in recovering two pulses under noiseless and noisy scenarios for $L=1$. The results are presented in Figs. \ref{fig:plot1}, and \ref{fig:plot11}, respectively, where the attained relative errors by BSGA and Ptych are included. For the second scenario, the FROG trace is corrupted by Gaussian noise with SNR = 20dB.

\begin{figure}[h]
	\centering
	\includegraphics[width=1\linewidth]{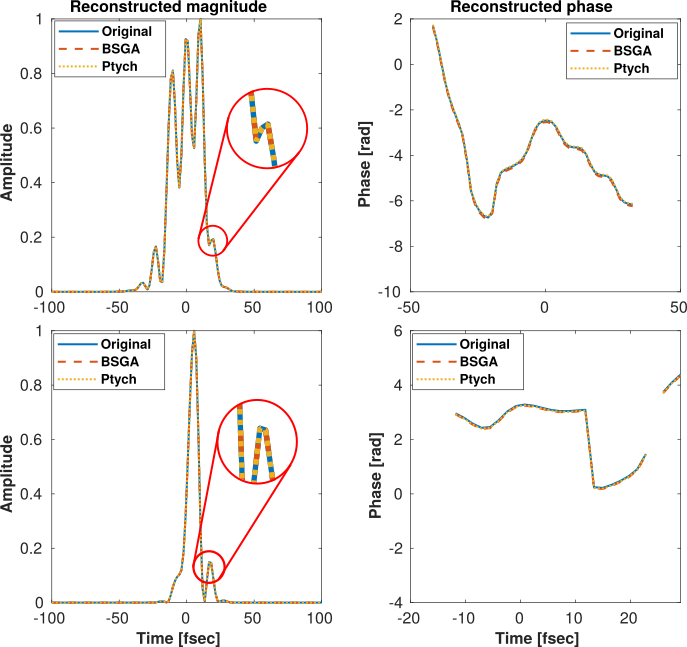}
	\caption{\footnotesize{Reconstructed pulses from complete FROG data ($L=1$), in the absence of noise. The attained error for both BSGA and Ptych was $1\times 10^{-6}$.}}
	\label{fig:plot1}
	\vspace{-1em}
\end{figure}

From the results in Fig. \ref{fig:plot1} it can be observed that both methods, BSGA and Ptych, are able to estimate the pulses and provide similar results for the noiseless case.

On the other hand, in Fig. \ref{fig:plot11}, the attained reconstructions, for the noisy scenario, indicate that BSGA is able to better estimate the pulse compared to Ptych. This advantage is obtained because of the effectiveness of the proposed smoothing update step and initialization strategy from complete data as reported in Fig. \ref{fig:orthoini}, and Figs. \ref{fig:initerrors}, \ref{fig:init}, respectively.
\begin{figure}[h]
	\centering
	\includegraphics[width=1\linewidth]{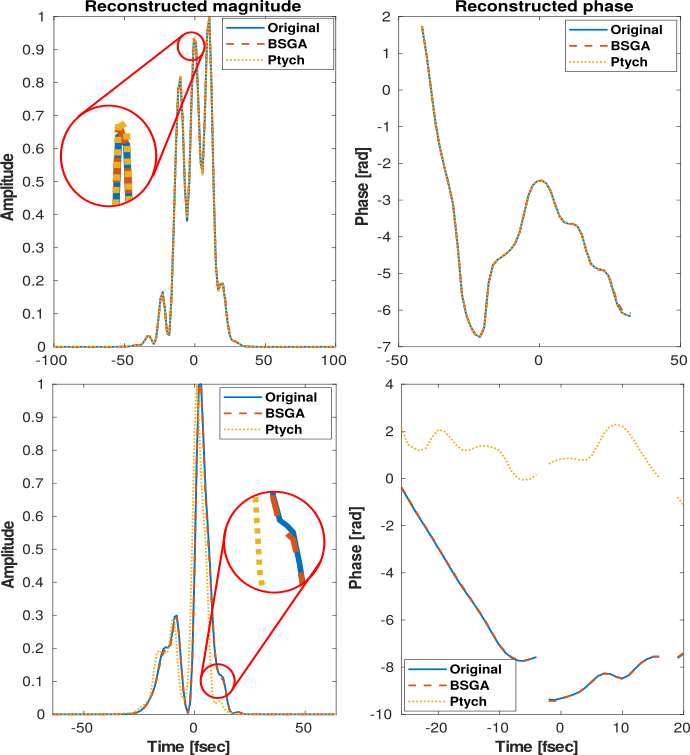}
	\caption{\footnotesize{Reconstructed pulses from complete noisy FROG data ($L=1$), with \hspace{7em}SNR = 20dB. The attained relative error for the top pulse for both BSGA and Ptych was $5\times 10^{-2}$. For the bottom pulse the attained errors were $5\times 10^{-2}$ and $2\times 10^{-1}$ for BSGA and Ptych respectively.}}
	\label{fig:plot11}
	\vspace{-1.5em}
\end{figure}

\begin{figure}[t!]
	\centering
	\includegraphics[width=0.8\linewidth]{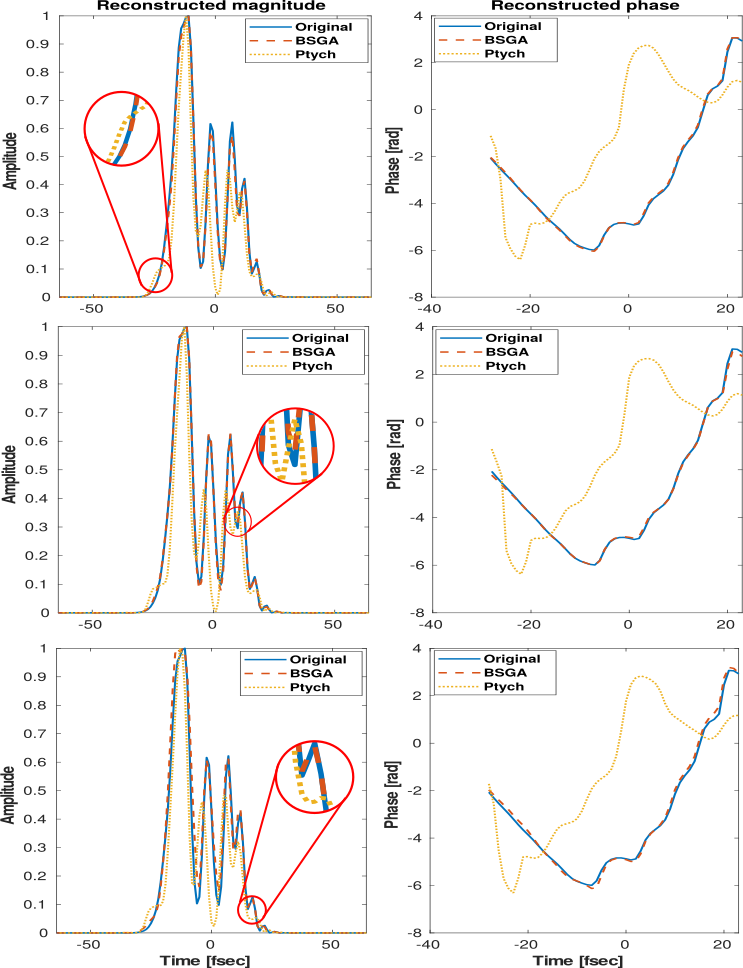}
	\caption{\footnotesize{Reconstructed pulses from incomplete noisy FROG traces \hspace{7em}(SNR = 20dB), for different values of $L$. (a) $L=2$, (b) $L=4$, and (c) $L=8$.}}
	\label{fig:plot21}
	\vspace{-2em}
\end{figure}

\subsection{Pulse Reconstruction Examples for $L>1$}
Next, we examine the recovery performance of BSGA from noisy incomplete data by adding Gaussian noise with SNR = 20dB, for $L\in \{2,4,8\}$. Figs. \ref{fig:plot21} and \ref{fig:plot2} illustrate the attained reconstructions for BSGA and Ptych; their attained relative errors are also reported in Fig. \ref{fig:plot2}. These figures suggest that BSGA better estimates the pulse and its FROG trace compared to Ptych over a range of values of $L$. This advantage is obtained because of the effectiveness of the proposed smoothing update step and initialization strategy from incomplete data as reported in Fig. \ref{fig:orthoini}, and Figs. \ref{fig:initerrors}, \ref{fig:init}, respectively.\vspace{-0.5em}

\begin{figure}[H]
	\centering
	\includegraphics[width=0.85\linewidth]{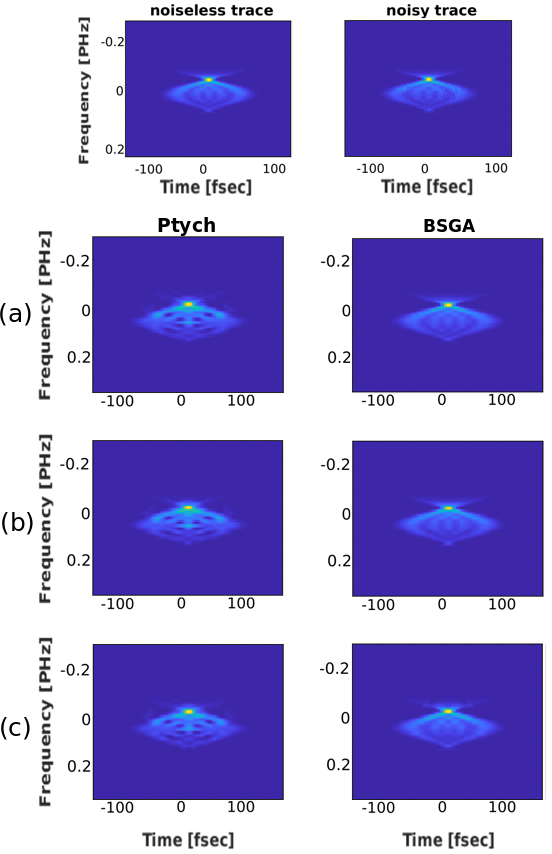}
	\caption{\footnotesize{Reconstruction of full FROG traces from incomplete noisy data for all methods. Top row shows the desirable full FROG trace without and with noise of SNR = 20dB. (a) $L=2$, (b) $L=4$, and (c) $L=8$. The attained errors for BSGA and Ptych were $5\times 10^{-2}$ and $2\times 10^{-1}$, respectively, for all the reconstructed FROG traces.}}
	\label{fig:plot2}
	\vspace{-1em}
\end{figure}

\subsection{Computational Complexity}
Simulations were conducted to compare the speed of convergence of the algorithms in the absence of noise, for $L=1,2$ and $4$. Table \ref{tab:time} reports the number of iterations and running time required by BSGA and Ptych to achieve a relative error of $1\times 10^{-6}$, averaged over 100 pulses. The experiment shows that BSGA is similar in time and number of iterations compared to Ptych for a range of values of $L$.\vspace{-0.5em}
\begin{center}
	\begin{table}[h]
		\caption{Comparison of iteration count and time cost}
		\centering
		\begin{tabular}{|c|c|c|c|}
			\hline
			&\textbf{Algorithms}& \textbf{Iterations} & \textbf{Time (s)}\\
			\hline
			\multirow{2}{*}{$L=1$}&BSGA & 60& 1.451 \\
			&Ptych &36 & 1.325 \\
			\hline
			\multirow{2}{*}{$L=2$}&BSGA & 111& 1.567 \\
			&Ptych &125 & 1.954 \\
			\hline
			\multirow{2}{*}{$L=4$}&BSGA & 265& 1.772 \\
			&Ptych &300 & 2.013 \\
			\hline
		\end{tabular}
		\label{tab:time}
	\end{table}
\vspace{-1em}
\end{center}

\section{Conclusion}
\label{sec:conclusion}
This paper presented a new method called BSGA, to fully characterize a pulse from its FROG trace. Our algorithm consists of two steps: a spectral initialization followed by successive refinements based upon a sequence of block stochastic gradient iterations. The principles of the proposed method were adopted from algorithms developed for the problems of recovering a signal from random quadratic measurements, and from the short-time Fourier phaseless data.

Numerical experiments were conducted to evaluate the performance of BSGA in comparison to the ptychography technique developed in \cite{sidorenko2016ptychographic}. The results show improvements in recovering the pulse for both magnitude and phase, from noisy incomplete data. Additionally, the numerical results suggest the effectiveness of the proposed initialization under both noiseless and noisy scenarios with incomplete data. Future work should include implementing BSGA on real data to further validate its performance. Another interesting research direction is to examine similar strategies for blind FROG in which two signals are estimated simultaneously \cite{trebino2012frequency}. 


\ifCLASSOPTIONcaptionsoff
  \newpage
\fi

\appendices

\section{Proof Theorem 1}
\label{app:prooftheo4}
Let us define the search set as
\begin{align}
\mathcal{J}:=\{ \mathbf{z}\in \mathbb{C}^{N}, \text{$B$-bandlimited}: \text{dist}(\mathbf{x},\mathbf{z})\leq \rho, B\leq N/2 \},
\label{eq:covergenceset}
\end{align}
for some small constant $\rho>0$. Recall that $\mathbf{z}$ is a $B$-bandlimited signal if there exists $k$ such that $\tilde{\mathbf{z}}[k]=\cdots=\tilde{\mathbf{z}}[N+k+B-1]=0$, where $\tilde{\mathbf{z}}$ is the Fourier transform of $\mathbf{z}$. The bandlimit condition guarantees that we have unique solution, according to Proposition~\ref{theo:uniqueness}. 

In order to prove Theorem \ref{theo:contraction}, the function $h(\mathbf{z},\mu)$ in \eqref{eq:auxproblem} needs to satisfy the four requirements stated in the following lemma, which are used in the analysis of convergence for stochastic gradient methods \cite{ghadimi2013stochastic}. 

\begin{lemma}
	\label{assu:assumption}
	The function $h(\mathbf{z},\mu)$ in \eqref{eq:auxproblem} and its Wirtinger derivative in \eqref{eq:gradient} satisfy the following properties.
	\begin{enumerate}
		\item The cost function $h(\mathbf{z},\mu)$ in \eqref{eq:auxproblem} is bounded below.
		
		\item The set $\mathcal{J}$ as defined in \eqref{eq:covergenceset} is closed and bounded.
		
		\item There exists a constant $U>0$, such that
		\begin{equation}
		\left\lVert \frac{\partial h(\mathbf{z}_{1},\mu)}{\partial \overline{\mathbf{z}}}  - \frac{\partial h(\mathbf{z}_{2},\mu)}{\partial \overline{\mathbf{z}}} \right\rVert_{2}\leq U\left\lVert\mathbf{z}_{1}-\mathbf{z}_{2} \right\rVert_{2},
		\label{eq:assumption}
		\end{equation}
		holds for all $\mathbf{z}_{1},\mathbf{z}_{2}\in \mathcal{J}$.
		
		\item For all $\mathbf{z}\in \mathcal{J}$
		\begin{align}
		\mathbb{E}_{\Gamma_{(t)}}\left[\left\lVert \mathbf{d}_{\Gamma_{(t)}}  - \frac{\partial h(\mathbf{z},\mu)}{\partial \overline{\mathbf{z}}} \right\rVert^{2}_{2}\right]\leq \zeta^{2},
		\label{eq:variance}
		\end{align}
		for some $\zeta>0$, where $\mathbf{d}_{\Gamma_{(t)}}$ is as in Line 9 of Algorithm \ref{alg:smothing}.
	\end{enumerate}
\end{lemma}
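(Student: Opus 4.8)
The plan is to verify the four items separately, using only the elementary structure of the FROG map $\mathbf{z}\mapsto\mathbf{W}$: it is the composition of the bilinear polynomial forms $A_{k,p}(\mathbf{z}):=\sum_{n=0}^{N-1}\mathbf{z}[n]\mathbf{z}[n+pL]e^{-2\pi ink/N}$ with $w\mapsto|w|^{2}$, and the smoothing $\varphi_{\mu}(w)=\sqrt{w^{2}+\mu^{2}}$ keeps every radicand to which it is applied strictly positive. Item 1 is immediate: each summand $\ell_{k,p}(\mathbf{z},\mu)$ in \eqref{eq:ellfunction} is a square of a real number, so $h(\mathbf{z},\mu)\ge 0$. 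For item 2, I would write $\mathcal{J}$ as the intersection of $\{\mathbf{z}:\mathrm{dist}(\mathbf{x},\mathbf{z})\le\rho\}$ with the set of $B$-bandlimited signals, $B\le N/2$; the latter is a finite union of linear subspaces (one per placement of the block of vanishing DFT coordinates) hence closed, and the former is closed since $\mathbf{z}\mapsto\mathrm{dist}(\mathbf{x},\mathbf{z})$ is continuous. Boundedness follows from the slice $p=0$, where $\mathbf{W}[0,k]=\big|\sum_{n}\mathbf{z}[n]^{2}e^{-2\pi ink/N}\big|^{2}$, so Parseval gives $\sum_{k}\mathbf{W}[0,k]=N\|\mathbf{z}\|_{4}^{4}$; combined with $\|\sqrt{\mathbf{W}}\|_{\mathrm F}\le(1+\rho)\|\sqrt{\mathbf{Z}}\|_{\mathrm F}$ (triangle inequality) this bounds $\|\mathbf{z}\|_{4}$ and hence $\|\mathbf{z}\|_{2}$, so $\mathcal{J}$ is compact. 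For item 4, the minibatch construction gives $\mathbb{E}_{\Gamma_{(t)}}[\mathbf{d}_{\Gamma_{(t)}}]=\partial h(\mathbf{z},\mu)/\partial\overline{\mathbf{z}}$, so $\mathbb{E}_{\Gamma_{(t)}}\|\mathbf{d}_{\Gamma_{(t)}}-\partial h/\partial\overline{\mathbf{z}}\|_{2}^{2}\le\mathbb{E}_{\Gamma_{(t)}}\|\mathbf{d}_{\Gamma_{(t)}}\|_{2}^{2}$; there are finitely many index sets $\Gamma$, each $\mathbf{d}_{\Gamma}(\mathbf{z},\mu)$ is continuous on the compact set $\mathcal{J}\times[0,\mu^{(0)}]$ (continuity at $\mu=0$ uses the lower bound on $|A_{k,p}|$ below), and so $\zeta^{2}:=\sup_{\mathbf{z}\in\mathcal{J},\,\mu\le\mu^{(0)},\,\Gamma}\|\mathbf{d}_{\Gamma}(\mathbf{z},\mu)\|_{2}^{2}$ is finite.

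The substance is item 3, and the place where \emph{almost all signals} and \emph{$\rho$ sufficiently small} enter is a uniform lower bound $|A_{k,p}(\mathbf{z})|\ge c_{0}>0$ on $\mathcal{J}$. For generic $\mathbf{x}$ none of the finitely many nonzero polynomials $A_{k,p}$ vanishes at $\mathbf{x}$ (their common zero set is a proper algebraic subvariety), so $\sqrt{\mathbf{Z}[p,k]}\ge 2c_{0}$ for all $k,p$; and for $\mathbf{z}\in\mathcal{J}$, $\big||A_{k,p}(\mathbf{z})|-\sqrt{\mathbf{Z}[p,k]}\big|\le\|\sqrt{\mathbf{W}}-\sqrt{\mathbf{Z}}\|_{\mathrm F}\le\rho\|\sqrt{\mathbf{Z}}\|_{\mathrm F}\le c_{0}$ once $\rho$ is small. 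Granting this, $\varphi_{\mu}(|A_{k,p}(\mathbf{z})|)=\sqrt{A_{k,p}(\mathbf{z})\overline{A_{k,p}(\mathbf{z})}+\mu^{2}}$ has radicand at least $c_{0}^{2}$, so it is a $C^{2}$ function of $\mathbf{z}$ whose first and second (real/Wirtinger) derivatives are bounded on $\mathcal{J}$ \emph{uniformly in} $\mu\in[0,\mu^{(0)}]$; averaging over $k,p$ bounds the Hessian of $h(\cdot,\mu)$ on $\mathcal{J}$ by some $H_{0}$ that does not depend on $\mu$.

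Because $\mathcal{J}$ is not convex, I would close item 3 with a two-case split: if $\|\mathbf{z}_{1}-\mathbf{z}_{2}\|_{2}$ is below a fixed threshold $\epsilon_{0}$, the segment $[\mathbf{z}_{1},\mathbf{z}_{2}]$ stays in a fixed neighborhood of $\mathcal{J}$ on which the same Hessian bound still holds, and the mean-value inequality yields \eqref{eq:assumption} with constant $H_{0}$; if $\|\mathbf{z}_{1}-\mathbf{z}_{2}\|_{2}\ge\epsilon_{0}$, the crude estimate $\|\partial h(\mathbf{z}_{1},\mu)/\partial\overline{\mathbf{z}}-\partial h(\mathbf{z}_{2},\mu)/\partial\overline{\mathbf{z}}\|_{2}\le 2\sup_{\mathcal{J}}\|\partial h/\partial\overline{\mathbf{z}}\|_{2}$ gives \eqref{eq:assumption} after dividing by $\epsilon_{0}$. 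Taking $U$ to be the larger of the two resulting constants produces a single $U>0$ depending on $\rho$ but not on $\mu$, which is precisely what Theorem \ref{theo:contraction} requires for the fixed step size $\alpha\in(0,2/U]$. The only genuinely delicate point is exactly this interplay: keeping the gradient-Lipschitz constant uniform while the smoothing parameter $\mu^{(t)}$ is driven to $0$, which is what forces the estimate to be carried out on $\mathcal{J}$ (where $|A_{k,p}|$ is bounded away from $0$) rather than on an enclosing convex ball.
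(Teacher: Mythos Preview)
Your proposal is correct and, in item~3, takes a genuinely different route from the paper. The paper proves \eqref{eq:assumption} by decomposing the gradient termwise as $\partial h/\partial\overline{\mathbf{z}}=\tfrac{1}{N^{2}}\sum_{k,p}\big(f_{k,p}(\mathbf{z})+g_{k,p}(\mathbf{z})\big)$ with $f_{k,p}(\mathbf{z})=\rho_{k,p}(\mathbf{z})\mathbf{D}_{p}(\mathbf{z})\mathbf{f}_{k}$ (and analogously for $g_{k,p}$), and then estimates $\|f_{k,p}(\mathbf{z}_{1})-f_{k,p}(\mathbf{z}_{2})\|_{2}$ directly through a chain of triangle inequalities. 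In doing so it controls the denominator $\varphi_{\mu}(|A_{k,p}|)$ only via $\varphi_{\mu}\ge\mu$, so the resulting Lipschitz constants $r_{k,p},s_{k,p}$ carry explicit factors $1/\mu$ and $1/\mu^{2}$; the genericity lemma $|A_{k,p}(\mathbf{z})|>0$ appears in the paper's argument but is not actually used to remove the $\mu$-dependence. Your approach instead invokes genericity together with the smallness of $\rho$ to secure a uniform lower bound $|A_{k,p}(\mathbf{z})|\ge c_{0}$ on $\mathcal{J}$, which makes $\varphi_{\mu}(|A_{k,p}|)\ge c_{0}$ for \emph{all} $\mu\ge 0$ and lets you bound the Hessian of $h(\cdot,\mu)$ uniformly in $\mu\in[0,\mu^{(0)}]$. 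That buys exactly what you flag at the end: a single constant $U$ valid for the fixed step size $\alpha\in(0,2/U]$ even as $\mu^{(t)}\to 0$, something the paper's term-by-term bound does not deliver as written. Your two-case workaround for the non-convexity of $\mathcal{J}$ is also absent from the paper (which applies the estimates pointwise and does not discuss the segment between $\mathbf{z}_{1}$ and $\mathbf{z}_{2}$), and your treatments of items~2 and~4 are more detailed than the paper's: the paper dismisses item~2 as ``by definition'' and derives item~4 by combining $\|\mathbf{w}_{1}+\mathbf{w}_{2}\|_{2}^{2}\le 2\|\mathbf{w}_{1}\|_{2}^{2}+2\|\mathbf{w}_{2}\|_{2}^{2}$ with the Lipschitz constants from item~3, whereas you use unbiasedness plus compactness directly.
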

\begin{proof}
	See Appendix \ref{app:lemma}.
\end{proof}
%
%

To prove Theorem \ref{theo:contraction}, denote the set $\hspace{8em}\mathcal{K}_{1}:=\{t|\mu^{(t+1)}=\gamma_{1}\mu^{(t)} \}$ with $\gamma_{1}\in (0,1)$, which is a tunable parameter \cite{zhang2009smoothing}. If the set $\mathcal{K}_{1}$ is finite, then according to Lines 13-16 in Algorithm \ref{alg:smothing} there exists an integer $\grave{t}$, such that, for all $t>\grave{t}$
\begin{align}
\left\lVert \mathbf{d}_{\Gamma_{(t)}} \right\rVert_{2}\geq \gamma \mu^{(\grave{t})},
\end{align}
with $\gamma\in (0,1)$. Taking $\grave{\mu}=\mu^{(\grave{t})}$, the optimization problem \eqref{eq:auxproblem} reduces to
\begin{align}
\min_{\mathbf{z}\in \mathbb{C}^{N}} h(\mathbf{z},\grave{\mu}).
\label{eq:resultantOptiProblem}
\end{align}
Now, considering the properties stated in Lemma \ref{assu:assumption}, from \cite[Theorem 2.1]{ghadimi2013stochastic} we get 
\begin{align}
\lim_{t\rightarrow \infty} \left\lVert\frac{\partial h(\mathbf{x}^{(t)},\mu^{(t)})}{\partial \overline{\mathbf{z}}}\right\rVert_{2}=\lim_{t\rightarrow \infty} \left\lVert\mathbb{E}_{\Gamma_{(t)}}\left[\mathbf{d}_{\Gamma_{(t)}}\right] \right\rVert_{2}=0.
\label{eq:conjugate1}
\end{align}
It can be readily seen that \eqref{eq:conjugate1} contradicts the assumption $\left\lVert \mathbf{d}_{\Gamma_{(t)}} \right\rVert_{2}\geq \gamma \mu^{(\grave{t})}$, for all $t>\grave{t}$. This shows that $\mathcal{K}_{1}$ must be infinite and $\displaystyle\lim_{t\rightarrow \infty} \mu^{(t)} = 0$. 

Given that $\mathcal{K}_{1}$ is infinite, we deduce that

\small{\begin{align}
	\lim_{t\rightarrow \infty} \left\lVert\frac{\partial h(\mathbf{x}^{(t)},\mu^{(t)})}{\partial \overline{\mathbf{z}}}\right\rVert_{2}&=\lim_{t\rightarrow \infty} \left\lVert\mathbb{E}_{\Gamma_{(t)}}\left[\mathbf{d}_{\Gamma_{(t)}}\right] \right\rVert_{2} \nonumber\\
	&\leq \lim_{t\rightarrow \infty} \mathbb{E}_{\Gamma_{(t)}}\left[\left\lVert\mathbf{d}_{\Gamma_{(t)}}\right\rVert_{2}\right]  \leq \gamma \lim_{t\rightarrow \infty} \mu^{(t)} = 0,
	\label{eq:gradientMu4}
	\end{align}}\normalsize
where the second line follows from the Jensen inequality. Therefore, from \eqref{eq:gradientMu4} the result of Theorem \ref{theo:contraction} holds.

\section{Proof of Lemma \ref{assu:assumption}}
\label{app:lemma}
The proof of Lemma \ref{assu:assumption} is obtained by individually proving the following four requirements.

1) Following from the definition of $h(\mathbf{z},\mu)$ in \eqref{eq:auxproblem} it is clear that $h(\mathbf{z},\mu)\geq 0$ and thus bounded below.

2) This holds by definition.

3) From \eqref{eq:stochStep} it follows that the $\ell$-th entry of $\frac{\partial h(\mathbf{z},\mu)}{\partial \overline{\mathbf{z}}}$ is given by
\begin{align}
\frac{\partial h(\mathbf{z},\mu)}{\partial \overline{\mathbf{z}}}[\ell] = \frac{1}{N^{2}}\sum_{k,p=0}^{N-1}\left( \mathbf{f}_{k}^{H}\mathbf{g}_{p}(\mathbf{z}) - \upsilon_{k,p} \right) \overline{q}_{\ell,p}e^{2\pi i\ell k/N},
\end{align}
where $\upsilon_{k,p}= \sqrt{\mathbf{Z}[p,k]}\frac{\mathbf{f}_{k}^{H}\mathbf{g}_{p}(\mathbf{z})}{\varphi_{\mu}\left(\left\lvert\mathbf{f}_{k}^{H}\mathbf{g}_{p}(\mathbf{z}) \right \rvert\right)}$, and 
\begin{align}
\overline{q}_{\ell,p}&=\overline{\mathbf{z}}[\ell+p]+\overline{\mathbf{z}}[\ell-p]e^{-2\pi ikp/N},\nonumber\\
\mathbf{g}_{p}(\mathbf{z}) &= \left[\mathbf{z}[0]\mathbf{z}[pL],\cdots,\mathbf{z}[N-1]\mathbf{z}[N-1+pL] \right]^{T}\nonumber.
\label{eq:auxvector}
\end{align}
Let $\mathbf{D}_{p}(\mathbf{z})$ be a diagonal matrix composed of the entries of $\overline{\mathbf{z}}_{pL}[n]=\overline{\mathbf{z}}[n+pL]$. Using \eqref{eq:vectora}, the term $\overline{q}_{\ell,p}e^{2\pi i\ell k/N}$ can be rewritten as 
\begin{align}
\overline{q}_{\ell,p}e^{2\pi i\ell k/N}=\left(\mathbf{D}_{p}(\mathbf{z})\mathbf{f}_{k}\right)[\ell] + \omega^{-kp}\left(\mathbf{D}_{-p}(\mathbf{z})\mathbf{f}_{k}\right)[\ell].
\end{align}
Thus,
\begin{align}
\frac{\partial h(\mathbf{z},\mu)}{\partial \overline{\mathbf{z}}} = \frac{1}{N^{2}}\sum_{p,k=0}^{N-1}f_{k,p}(\mathbf{z}) + g_{k,p}(\mathbf{z}),
\label{eq:convergence3}
\end{align}
where
\begin{align}
	f_{k,p}(\mathbf{z})&=\rho_{k,p}(\mathbf{z})\mathbf{D}_{p}(\mathbf{z})\mathbf{f}_{k},\nonumber\\
	g_{k,p}(\mathbf{z})&=\omega^{-kp}\rho_{k,p}(\mathbf{z})\mathbf{D}_{-p}(\mathbf{z})\mathbf{f}_{k},
\end{align}
and
\begin{align}
\rho_{k,p}(\mathbf{z})= \mathbf{f}_{k}^{H}\mathbf{g}_{p}(\mathbf{z}) -\sqrt{\mathbf{Z}[p,k]}\frac{\mathbf{f}_{k}^{H}\mathbf{g}_{p}(\mathbf{z})}{\varphi_{\mu}\left(\left\lvert\mathbf{f}_{k}^{H}\mathbf{g}_{p}(\mathbf{z}) \right \rvert\right)}.
\label{eq:rhokp}
\end{align}

To prove 3) we establish that any $f_{k,p}(\mathbf{z})$ and $g_{k,p}(\mathbf{z})$ satisfy
\begin{align}
\left\lVert f_{k,p}(\mathbf{z}_{1}) -  f_{k,p}(\mathbf{z}_{2}) \right\rVert_{2} \leq r_{k,p} \lVert \mathbf{z}_{1} - \mathbf{z}_{2} \rVert_{2},
\label{eq:functionf}
\end{align}
and
\begin{align}
\left\lVert g_{k,p}(\mathbf{z}_{1}) -  g_{k,p}(\mathbf{z}_{2}) \right\rVert_{2} \leq s_{k,p} \lVert \mathbf{z}_{1} - \mathbf{z}_{2} \rVert_{2},
\label{eq:functiong}
\end{align}
for all $\mathbf{z}_{1},\mathbf{z}_{2}\in \mathcal{J}$ and some constants $r_{k,p},s_{k,p}>0$. In fact, once we prove \eqref{eq:functionf}, it can be performed a similar analysis for $g_{k,p}(\mathbf{z})$, and thus the result of this third part holds. 

From the definition of $f_{k,p}(\mathbf{z})$, for any $\mathbf{z}_{1},\mathbf{z}_{2}\in \mathcal{J}$ we have that
\begin{equation}
\small{\frac{1}{\sqrt{N}}\left\lVert f_{k,p}(\mathbf{z}_{1}) -  f_{k,p}(\mathbf{z}_{2}) \right\rVert_{2}\leq  \left\lVert \rho_{k,p}(\mathbf{z}_{1})\overline{\mathbf{z}}_{1} - \rho_{k,p}(\mathbf{z}_{2})\overline{\mathbf{z}}_{2} \right\rVert_{2},}
\label{eq:ine1}
\end{equation}
considering that $\mathbf{D}_{p}(\mathbf{z}_{1})$ and $\mathbf{D}_{p}(\mathbf{z}_{2})$ are diagonal matrices, and $\lVert \mathbf{f}_{k} \rVert_{2}=\sqrt{N}$. Observe that from \eqref{eq:rhokp} and \eqref{eq:ine1} it can be obtained that
\begin{align}
&\frac{1}{\sqrt{N}}\left\lVert f_{k,p}(\mathbf{z}_{1}) -  f_{k,p}(\mathbf{z}_{2}) \right\rVert_{2} \nonumber\\
\leq& \frac{\left\lvert\mathbf{f}_{k}^{H}\mathbf{g}_{p}(\mathbf{z}_{1})\right\rvert}{\mu}\left(\varphi_{\mu}\left(\left\lvert\mathbf{f}_{k}^{H}\mathbf{g}_{p}(\mathbf{z}_{1}) \right \rvert\right)+\sqrt{\mathbf{Z}[p,k]} \right) \left\lVert \mathbf{z}_{1} - \mathbf{z}_{2} \right\rVert_{2} \nonumber\\
+&\left\lVert \mathbf{z}_{2} \right\rVert_{2}\underbrace{\left\lvert \rho_{k,p}(\mathbf{z}_{1})- \rho_{k,p}(\mathbf{z}_{2}) \right\rvert}_{p_{1}},
\label{eq:ine2}
\end{align}
where the second inequality comes from the fact that $\varphi_{\mu}\left(\left\lvert\mathbf{f}_{k}^{H}\mathbf{g}_{p}(\mathbf{z}_{1}) \right \rvert\right)\geq \mu$. The term $p_{1}$ in \eqref{eq:ine2} can be upper bounded as

\footnotesize{\begin{align}
	&p_{1}\leq \left\lvert \mathbf{f}_{k}^{H}\mathbf{g}_{p}(\mathbf{z}_{1}) - \mathbf{f}_{k}^{H}\mathbf{g}_{p}(\mathbf{z}_{2}) \right\rvert \nonumber\\
	&+\sqrt{\mathbf{Z}[p,k]}\left\lvert\frac{\mathbf{f}_{k}^{H}\mathbf{g}_{p}(\mathbf{z}_{1})}{\varphi_{\mu}\left(\left\lvert\mathbf{f}_{k}^{H}\mathbf{g}_{p}(\mathbf{z}_{1}) \right \rvert\right)} - \frac{\mathbf{f}_{k}^{H}\mathbf{g}_{p}(\mathbf{z}_{2})}{\varphi_{\mu}\left(\left\lvert\mathbf{f}_{k}^{H}\mathbf{g}_{p}(\mathbf{z}_{2}) \right \rvert\right)}\right\rvert\nonumber\\
	&\leq \left\lvert \mathbf{f}_{k}^{H}\mathbf{g}_{p}(\mathbf{z}_{1}) - \mathbf{f}_{k}^{H}\mathbf{g}_{p}(\mathbf{z}_{2}) \right\rvert\nonumber\\
	&+\frac{\sqrt{\mathbf{Z}[p,k]}}{\mu^{2}}\varphi_{\mu}\left(\left\lvert\mathbf{f}_{k}^{H}\mathbf{g}_{p}(\mathbf{z}_{2}) \right \rvert\right)\left\lvert \mathbf{f}_{k}^{H}\mathbf{g}_{p}(\mathbf{z}_{1}) - \mathbf{f}_{k}^{H}\mathbf{g}_{p}(\mathbf{z}_{2}) \right\rvert \nonumber\\
	&+\frac{\sqrt{\mathbf{Z}[p,k]}}{\mu^{2}}\left\lvert \mathbf{f}_{k}^{H}\mathbf{g}_{p}(\mathbf{z}_{2})\right\rvert \left\lvert \varphi_{\mu}\left(\left\lvert\mathbf{f}_{k}^{H}\mathbf{g}_{p}(\mathbf{z}_{1}) \right \rvert\right) - \varphi_{\mu}\left(\left\lvert\mathbf{f}_{k}^{H}\mathbf{g}_{p}(\mathbf{z}_{2}) \right \rvert\right) \right\rvert.
	\label{eq:ine3}
	\end{align}}\normalsize

Recall that $\mathcal{J}$ is a closed bounded set, and thus compact. Since $\varphi_{\mu}(\cdot)$ is a continuous function, there exists a constant $M_{\varphi_{\mu}}$ such that $\varphi_{\mu}\left(\left\lvert\mathbf{f}_{k}^{H}\mathbf{g}_{p}(\mathbf{z}) \right \rvert\right)\leq M_{\varphi_{\mu}}$ for all $\mathbf{z}\in \mathcal{J}$. Also, from Lemma 2 in \cite{8410803} we have that $\varphi_{\mu}(\cdot)$ is a 1-Lipschitz function. Combining this with \eqref{eq:ine3} we get
\begin{align}
p_{1}&\leq \left\lvert \mathbf{f}_{k}^{H}\mathbf{g}_{p}(\mathbf{z}_{1}) - \mathbf{f}_{k}^{H}\mathbf{g}_{p}(\mathbf{z}_{2}) \right\rvert \nonumber\\
+&\frac{\sqrt{\mathbf{Z}[p,k]}M_{\varphi_{\mu}}}{\mu^{2}}\left\lvert \mathbf{f}_{k}^{H}\mathbf{g}_{p}(\mathbf{z}_{1}) - \mathbf{f}_{k}^{H}\mathbf{g}_{p}(\mathbf{z}_{2}) \right\rvert \nonumber\\
+&\frac{\sqrt{\mathbf{Z}[p,k]}}{\mu^{2}}\left\lvert \mathbf{f}_{k}^{H}\mathbf{g}_{p}(\mathbf{z}_{2})\right\rvert \left\lvert \left\lvert\mathbf{f}_{k}^{H}\mathbf{g}_{p}(\mathbf{z}_{1}) \right \rvert - \left\lvert\mathbf{f}_{k}^{H}\mathbf{g}_{p}(\mathbf{z}_{2}) \right \rvert \right\rvert,
\end{align}
and thus
\begin{align}
&p_{1}\leq \left(\frac{\sqrt{\mathbf{Z}[p,k]}M_{\varphi_{\mu}}}{\mu^{2}}+ 1\right)\left\lvert \mathbf{f}_{k}^{H}\mathbf{g}_{p}(\mathbf{z}_{1}) - \mathbf{f}_{k}^{H}\mathbf{g}_{p}(\mathbf{z}_{2}) \right\rvert \nonumber\\
+&\frac{\sqrt{\mathbf{Z}[p,k]}}{\mu^{2}}\left\lvert \mathbf{f}_{k}^{H}\mathbf{g}_{p}(\mathbf{z}_{2})\right\rvert  \left\lvert\mathbf{f}_{k}^{H}\mathbf{g}_{p}(\mathbf{z}_{1})  - \mathbf{f}_{k}^{H}\mathbf{g}_{p}(\mathbf{z}_{2})  \right\rvert,
\label{eq:ine4}
\end{align}
where \eqref{eq:ine4} results from applying the triangular inequality. Putting together \eqref{eq:ine2} and \eqref{eq:ine4} we obtain that
\begin{align}
&\frac{1}{\sqrt{N}}\left\lVert f_{k,p}(\mathbf{z}_{1}) -  f_{k,p}(\mathbf{z}_{2}) \right\rVert_{2} \nonumber\\
\leq& \frac{\left\lvert\mathbf{f}_{k}^{H}\mathbf{g}_{p}(\mathbf{z}_{1})\right\rvert}{\mu}\left(M_{\varphi_{\mu}} +\sqrt{\mathbf{Z}[p,k]}\right) \left\lVert \mathbf{z}_{1} - \mathbf{z}_{2} \right\rVert_{2}\nonumber\\
+&\left\lVert \mathbf{z}_{2} \right\rVert_{2}\left(\frac{\sqrt{\mathbf{Z}[p,k]}M_{\varphi_{\mu}}}{\mu^{2}}+ 1\right)\left\lvert \mathbf{f}_{k}^{H}\mathbf{g}_{p}(\mathbf{z}_{1}) - \mathbf{f}_{k}^{H}\mathbf{g}_{p}(\mathbf{z}_{2}) \right\rvert \nonumber\\
+&\frac{\left\lVert \mathbf{z}_{2} \right\rVert_{2}\sqrt{\mathbf{Z}[p,k]}}{\mu^{2}}\left\lvert \mathbf{f}_{k}^{H}\mathbf{g}_{p}(\mathbf{z}_{2})\right\rvert \left\lvert\mathbf{f}_{k}^{H}\mathbf{g}_{p}(\mathbf{z}_{1})  - \mathbf{f}_{k}^{H}\mathbf{g}_{p}(\mathbf{z}_{2})  \right\rvert.
\label{eq:ine5}
\end{align}

Observe that the upper bound in \eqref{eq:ine5} directly depends on a term of the form $\mathbf{f}_{k}^{H}\mathbf{g}_{p}(\mathbf{z})$ for some $\mathbf{z}\in \mathcal{J}$, which might be zero. However, Lemma \ref{lem:generic} proves that $\left\lvert\mathbf{f}_{k}^{H}\mathbf{g}_{p}(\mathbf{z})\right\rvert>0$ or equivalently $\mathbf{f}_{k}^{H}\mathbf{g}_{p}(\mathbf{z})\not = 0$, for almost all $\mathbf{z}\in \mathcal{J}$. 
\begin{lemma}
	Let $\mathbf{z}\in \mathcal{J}$ where $\mathcal{J}$ as defined in \eqref{eq:covergenceset}. Then, for almost all $\mathbf{z}\in \mathcal{J}$ the following holds
	\begin{align}
	\left\lvert\mathbf{f}_{k}^{H}\mathbf{g}_{p}(\mathbf{z})\right\rvert>0,
	\end{align}
	for all $k,p\in \{0,\cdots,N-1\}$, with $\mathbf{g}_{p}(\mathbf{z})$ as in \eqref{eq:auxvector}.
	\label{lem:generic}
\end{lemma}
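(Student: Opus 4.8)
The plan is to view $\Phi_{k,p}(\mathbf{z}) := \mathbf{f}_k^H\mathbf{g}_p(\mathbf{z}) = \sum_{n=0}^{N-1}\mathbf{z}[n]\,\mathbf{z}[n+pL]\,e^{-2\pi i n(k-1)/N}$ as a polynomial and to invoke the elementary fact that a polynomial which does not vanish identically has a zero set of Lebesgue measure zero. Observe that $\Phi_{k,p}$ is homogeneous of degree two in the entries of $\mathbf{z}$ and contains no conjugates, so it is a genuine (quadratic) polynomial, and there are only $N^2$ pairs $(k,p)$. Hence the proof splits into: (i) identifying the correct ambient space on which ``almost all $\mathbf{z}\in\mathcal{J}$'' is measured, and (ii) checking that each $\Phi_{k,p}$ is not identically zero on it.

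For (i): the set of $B$-bandlimited signals is a finite union $V_1\cup\cdots\cup V_m$ of linear subspaces of $\mathbb{C}^N$, one for each admissible position of the block of $N-B$ consecutive Fourier zeros, each $V_r\cong\mathbb{C}^{B}$. Thus $\mathcal{J}\subseteq\bigcup_r V_r$, and since $\mathbf{x}\in\mathcal{J}$ and $\mathrm{dist}(\mathbf{x},\cdot)$ is continuous, $\mathcal{J}$ contains a neighborhood of $\mathbf{x}$ relative to the subspace carrying $\tilde{\mathbf{x}}$'s support, so it has positive measure there; ``almost all $\mathbf{z}\in\mathcal{J}$'' is taken with respect to this natural ($2B$-real-dimensional) measure on $\bigcup_r(V_r\cap\mathcal{J})$. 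Parametrizing $\mathbf{z}\in V_r$ by its free Fourier coefficients $\mathbf{c}\in\mathbb{C}^{B}$, each map $\mathbf{c}\mapsto\Phi_{k,p}(\mathbf{z}(\mathbf{c}))$ is a quadratic polynomial in $(\Re\mathbf{c},\Im\mathbf{c})$; if it is not identically zero, its zeros form a measure-zero subset of $V_r$, and the union over the finitely many $(r,k,p)$ of these exceptional sets is still measure zero. Off this union, $|\Phi_{k,p}(\mathbf{z})|>0$ for all $(k,p)$ at once, which is the assertion. The Lipschitz constant $U$ in Lemma~\ref{assu:assumption} then depends on $\rho$ through a lower bound on $\min_{k,p}|\Phi_{k,p}(\mathbf{z})|$ over the (compact) $\mathcal{J}$ minus a small neighborhood of the exceptional set.

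The substance — and the step I expect to be the main obstacle — is (ii), showing $\Phi_{k,p}\not\equiv 0$ on $V_r$. Passing to the Fourier domain, $\Phi_{k,p}(\mathbf{z})=\tfrac1N\sum_{j}\tilde{\mathbf{z}}[j]\,\tilde{\mathbf{z}}[(\kappa-j)\bmod N]\,e^{2\pi i pL(\kappa-j)/N}$ with $\kappa=k-1$, so I would construct an explicit witness $\tilde{\mathbf{z}}$ supported on a few indices of the support interval $I_r$: a single spike at $a\in I_r$ with $2a\equiv\kappa$ collapses the sum to one nonzero term $\propto e^{2\pi i pLa/N}$; when no such $a$ exists, a two-spike witness at $a\neq b$ in $I_r$ with $a+b\equiv\kappa$ leaves only the cross term $\propto\tilde{\mathbf{z}}[a]\tilde{\mathbf{z}}[b]\bigl(e^{2\pi i pLa/N}+e^{2\pi i pLb/N}\bigr)$, nonzero for generic spike heights provided the two modulation phases do not cancel; should the unique admissible pair cancel, one enlarges the witness to a third spike in $I_r$ and argues non-vanishing of the resulting quadratic form in the spike heights. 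This succeeds precisely when $\kappa$ lies in the sumset $I_r+I_r$, which — because $B\le N/2$ — is exactly the range of indices on which the FROG trace $\mathbf{Z}[p,\cdot]$ is not already forced to vanish by band-limitedness; for the (at most $N-(2B-1)$ per $p$) remaining indices one has $\Phi_{k,p}\equiv 0$ together with $\mathbf{Z}[p,k]=0$, so that $f_{k,p}\equiv g_{k,p}\equiv 0$ there and \eqref{eq:ine5} holds trivially, which is all the proof of Lemma~\ref{assu:assumption} requires. Identifying the ambient union of subspaces and the measure-theoretic bookkeeping are routine; the delicate point is the phase-cancellation check inside the witness construction.
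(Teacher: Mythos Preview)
Your approach and the paper's coincide at the core: both reduce to the fact that the zero set of a nontrivial polynomial has Lebesgue measure zero. The paper's argument is essentially one line---it writes $|\mathbf{f}_k^H\mathbf{g}_p(\mathbf{z})|^2=0$ as a quartic polynomial constraint and asserts (citing \cite{bendory2017signal}) that this fails for almost all $\mathbf{z}\in\mathcal{J}$. Your treatment is considerably more careful: you work with the quadratic $\Phi_{k,p}$ directly, set up the ambient measure on the finite union of band-limited subspaces, and sketch explicit Fourier-domain witnesses to show non-triviality on each piece.

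In fact you notice something the paper glosses over: restricted to a band-limited subspace $V_r$ with $B\le N/2$, the polynomial $\Phi_{k,p}$ \emph{does} vanish identically whenever $k-1$ lies outside the sumset $I_r+I_r\pmod N$, so the lemma as literally stated fails on those indices. Your resolution---that on those indices the corresponding gradient summand $f_{k,p}+g_{k,p}$ vanishes identically, so \eqref{eq:ine5} holds trivially---is correct, and actually stronger than you state: since $\rho_{k,p}(\mathbf{z})$ in \eqref{eq:rhokp} carries the factor $\mathbf{f}_k^H\mathbf{g}_p(\mathbf{z})$, one has $f_{k,p}(\mathbf{z})=g_{k,p}(\mathbf{z})=0$ whenever $\Phi_{k,p}(\mathbf{z})=0$, regardless of whether $\mathbf{Z}[p,k]=0$. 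The only residual point is that \eqref{eq:assumption} compares $\mathbf{z}_1,\mathbf{z}_2$ which might a priori lie in different subspaces $V_r$; your argument handles each $V_r$ separately, so either restrict the analysis to the subspace carrying the support of $\tilde{\mathbf{x}}$ (natural, since gradient iterates do not leave it) or add a line addressing the cross-subspace case.
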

\begin{proof}
	We prove this lemma by contradiction. Suppose that $\left\lvert\mathbf{f}_{k}^{H}\mathbf{g}_{p}(\mathbf{z})\right\rvert=0$. Then, from \eqref{eq:system} we have that
	\begin{align}
	&\left\lvert\mathbf{f}_{k}^{H}\mathbf{g}_{p}(\mathbf{z})\right\rvert^{2} = \left\lvert\sum_{n=0}^{N-1} \mathbf{z}[n]\mathbf{z}[n+pL]e^{-2\pi ink/N} \right \rvert^{2}\nonumber\\
	&=\sum_{n,m=0}^{N-1}\left(\mathbf{z}[n]\overline{\mathbf{z}}[m]\mathbf{z}[n+pL]\overline{\mathbf{z}}[m+pL]\right)e^{\frac{2\pi i(m-n)k}{N}}=0.
	\label{eq:quadraticpolynomical}
	\end{align}
	Observe that \eqref{eq:quadraticpolynomical} is a quartic polynomial equation with respect to the entries of $\mathbf{z}$. However, for almost all signals $\mathbf{z}\in \mathcal{J}$ the left hand side of \eqref{eq:quadraticpolynomical} will not be equal to zero which leads to a contradiction \cite{bendory2017signal}. 
\end{proof}

Then, proceeding to bound the term $\left\lvert\mathbf{f}_{k}^{H}\mathbf{g}_{p}(\mathbf{z})\right\rvert$, notice that from \eqref{eq:system} we have that
\begin{align}
\left\lvert\mathbf{f}_{k}^{H}\mathbf{g}_{p}(\mathbf{z})\right\rvert &= \left\lvert\sum_{n=0}^{N-1} \mathbf{z}[n]\mathbf{z}[n+pL]e^{-2\pi ink/N} \right \rvert\nonumber\\
&\leq \sum_{n=0}^{N-1} \left\lvert\mathbf{z}[n]\mathbf{z}[n+pL]\right \rvert \leq N\lVert \mathbf{z} \rVert_{2},
\label{eq:ine6}
\end{align}
in which the second inequality arises from $\lVert \mathbf{z} \rVert_{2} \leq \sqrt{N}\lVert \mathbf{z} \rVert_{\infty}$ and $\lVert \mathbf{z} \rVert_{1}\leq \sqrt{N}\lVert \mathbf{z} \rVert_{2}$.	Combining \eqref{eq:ine5} and \eqref{eq:ine6} we get
\begin{align}
\frac{1}{\sqrt{N}}&\left\lVert f_{k,p}(\mathbf{z}_{1}) -  f_{k,p}(\mathbf{z}_{2}) \right\rVert_{2}\nonumber\\
\leq& \frac{N\lVert \mathbf{z}_{1} \rVert_{2}}{\mu}\left(M_{\varphi_{\mu}} +\sqrt{\mathbf{Z}[p,k]}\right) \left\lVert \mathbf{z}_{1} - \mathbf{z}_{2} \right\rVert_{2}\nonumber\\
+&\left\lVert \mathbf{z}_{2} \right\rVert_{2}\left(\frac{\sqrt{\mathbf{Z}[p,k]}M_{\varphi_{\mu}}}{\mu^{2}}+ 1\right)\left\lvert \mathbf{f}_{k}^{H}\mathbf{g}_{p}(\mathbf{z}_{1}) - \mathbf{f}_{k}^{H}\mathbf{g}_{p}(\mathbf{z}_{2}) \right\rvert \nonumber\\
+&\frac{N\lVert \mathbf{z}_{2}\rVert^{2}_{2}\sqrt{\mathbf{Z}[p,k]}}{\mu^{2}}  \left\lvert\mathbf{f}_{k}^{H}\mathbf{g}_{p}(\mathbf{z}_{1})  - \mathbf{f}_{k}^{H}\mathbf{g}_{p}(\mathbf{z}_{2})  \right\rvert.
\label{eq:ine7}
\end{align}

Now, we have to analyze the term $\left\lvert\mathbf{f}_{k}^{H}\mathbf{g}_{p}(\mathbf{z}_{1})  - \mathbf{f}_{k}^{H}\mathbf{g}_{p}(\mathbf{z}_{2})  \right\rvert$ in \eqref{eq:ine7}. Specifically, from \eqref{eq:system} it can be obtained that
\begin{align}
&\left\lvert\mathbf{f}_{k}^{H}\mathbf{g}_{p}(\mathbf{z}_{1})  - \mathbf{f}_{k}^{H}\mathbf{g}_{p}(\mathbf{z}_{2}) \right\rvert \nonumber\\
&\leq \sum_{n=0}^{N-1} \left\lvert\mathbf{z}_{1}[n]\mathbf{z}_{1}[n+pL] - \mathbf{z}_{2}[n]\mathbf{z}_{2}[n+pL]\right \rvert\nonumber\\
&\leq N\left(\lVert \mathbf{z}_{1}\rVert_{2} + \lVert \mathbf{z}_{2}\rVert_{2}\right) \lVert \mathbf{z}_{1} - \mathbf{z}_{2}\rVert_{2},
\label{eq:ine8}
\end{align}
where the second inequality results from $\lVert \mathbf{z} \rVert_{2} \leq \sqrt{N}\lVert \mathbf{z} \rVert_{\infty}$ and $\lVert \mathbf{z} \rVert_{1}\leq \sqrt{N}\lVert \mathbf{z} \rVert_{2}$. Combining \eqref{eq:ine7} and \eqref{eq:ine8} we obtain that
\begin{align}
\left\lVert f_{k,p}(\mathbf{z}_{1}) -  f_{k,p}(\mathbf{z}_{2}) \right\rVert_{2} \leq r_{k,p} \lVert \mathbf{z}_{1} - \mathbf{z}_{2} \rVert_{2},
\label{eq:secondresult}
\end{align}
where $r_{k,p} $ is given by
\begin{align}
&r_{k,p} =  \frac{N\sqrt{N}\lVert \mathbf{z}_{1} \rVert_{2}}{\mu}\left(M_{\varphi_{\mu}} +\sqrt{\mathbf{Z}[p,k]}\right) \nonumber\\
&+N\sqrt{N}\left(\lVert \mathbf{z}_{1}\rVert_{2} + \lVert \mathbf{z}_{2}\rVert_{2}\right)\lVert \mathbf{z}_{2}\rVert_{2}\left(\frac{\sqrt{\mathbf{Z}[p,k]}M_{\varphi_{\mu}}}{\mu^{2}}+ 1\right) \nonumber\\
&+N^{2}\sqrt{N}\left(\lVert \mathbf{z}_{1}\rVert_{2} + \lVert \mathbf{z}_{2}\rVert_{2}\right)\frac{\lVert \mathbf{z}_{2}\rVert^{2}_{2}\sqrt{\mathbf{Z}[p,k]}}{\mu^{2}}.
\label{eq:ine9}
\end{align}
Since the set $\mathcal{J}$ is bounded, then $\lVert \mathbf{z} \rVert_{2}<\infty$ for all $\mathbf{z}\in \mathcal{J}$. Therefore, $0<r_{k,p}<\infty$, and from \eqref{eq:secondresult} the result holds.\vspace{1em}

4) We proceed to prove \eqref{eq:variance}. Observe that
\begin{align}
&\mathbb{E}_{\Gamma_{(t)}}\left[\left\lVert \mathbf{d}_{\Gamma_{(t)}}  - \frac{\partial h(\mathbf{z},\mu)}{\partial \overline{\mathbf{z}}} \right\rVert^{2}_{2}\right]\nonumber\\
\leq& \mathbb{E}_{\Gamma_{(t)}}\left[ 2\left\lVert \mathbf{d}_{\Gamma_{(t)}}\right\rVert^{2}_{2}\right]  + 2\left\lVert\frac{\partial h(\mathbf{z},\mu)}{\partial \overline{\mathbf{z}}}\right\rVert^{2}_{2},
\label{eq:final1}
\end{align}
in which the inequality comes from the fact that $\left\lVert \mathbf{w}_{1} + \mathbf{w}_{2} \right\rVert^{2}_{2}\leq 2 \left(\left\lVert \mathbf{w}_{1} \right\rVert^{2}_{2}+\left\lVert \mathbf{w}_{2} \right\rVert^{2}_{2}\right)$ for any $\hspace{5em}\mathbf{w}_{1},\mathbf{w}_{2}\in \mathbb{C}^{N}$. Combining \eqref{eq:assumption} and \eqref{eq:final1} we have that
\begin{align}
\mathbb{E}_{\Gamma_{(t)}}\left[\left\lVert \mathbf{d}_{\Gamma_{(t)}}  - \frac{\partial h(\mathbf{z},\mu)}{\partial \overline{\mathbf{z}}} \right\rVert^{2}_{2}\right] &\leq  \mathbb{E}_{\Gamma_{(t)}}\left[ 2\left\lVert \mathbf{d}_{\Gamma_{(t)}}\right\rVert^{2}_{2}\right]  + 2U\left\lVert\mathbf{z}\right\rVert^{2}_{2},
\label{eq:final3}
\end{align}
for some $U>0$. Recall that $\Gamma_{(t)}$ is sampled uniformly at random from all subsets of $\{1,\cdots,N \}\times \{1\cdots,R\}$ with cardinality $Q$. From the definition of $\mathbf{d}_{\Gamma_{(t)}}$ in Line 9 of Algorithm \ref{alg:smothing}, it can be concluded that 
\begin{align}
\mathbb{E}_{\Gamma_{(t)}}\left[ 2\left\lVert \mathbf{d}_{\Gamma_{(t)}}\right\rVert^{2}_{2}\right] &\leq \frac{4Q}{N^{2}}\sum_{p,k=0}^{N-1} \left\lVert f_{k,p}(\mathbf{z}) + g_{k,p}(\mathbf{z})\right\rVert^{2}_{2}\nonumber\\
&\leq \frac{8Q}{N^{2}}\sum_{p,k=0}^{N-1} \left\lVert f_{k,p}(\mathbf{z})\right\rVert^{2}_{2} +  \left\lVert g_{k,p}(\mathbf{z})\right\rVert^{2}_{2},
\end{align}
using the fact that $\left\lVert \mathbf{w}_{1} + \mathbf{w}_{2} \right\rVert^{2}_{2}\leq 2 \left(\left\lVert \mathbf{w}_{1} \right\rVert^{2}_{2}+\left\lVert \mathbf{w}_{2} \right\rVert^{2}_{2}\right)$ for any $\mathbf{w}_{1},\mathbf{w}_{2}\in \mathbb{C}^{N}$. Furthermore, since $f_{k,p}(\mathbf{z})$ and $g_{k,p}(\mathbf{z})$ satisfy \eqref{eq:functionf} and \eqref{eq:functiong}, respectively, we conclude that
\begin{align}
\mathbb{E}_{\Gamma_{(t)}}\left[ 2\left\lVert \mathbf{d}_{\Gamma_{(t)}}\right\rVert^{2}_{2}\right] \leq \frac{8Q\left\lVert\mathbf{z}\right\rVert^{2}_{2}}{N^{2}}\sum_{p,k=0}^{N-1} r^{2}_{k,p} + s^{2}_{k,p},
\label{eq:final2}
\end{align}
for some constants $r_{k,p},s_{k,p}>0$. Thus, combining \eqref{eq:final3} and \eqref{eq:final2} we obtain that
\begin{align}
\mathbb{E}_{\Gamma_{(t)}}\left[\left\lVert \mathbf{d}_{\Gamma_{(t)}}  - \frac{\partial h(\mathbf{z},\mu)}{\partial \overline{\mathbf{z}}} \right\rVert^{2}_{2}\right] \leq \zeta^{2},
\label{eq:final4}
\end{align}
where $\zeta$ is defined as
\begin{align}
\zeta = \left\lVert\mathbf{z}\right\rVert_{2}\sqrt{\frac{8Q}{N^{2}}\sum_{p,k=0}^{N-1} r^{2}_{k,p} + s^{2}_{k,p} + 2U}.
\end{align}
Notice $\zeta<\infty$ because the set $\mathcal{J}$ is bounded. Thus, from \eqref{eq:final4} the result holds.

\bibliography{sample}
\bibliographystyle{ieeetr}

\end{document}